\newtheoremstyle{note}% name
{3pt}% Space above
{3pt}% Space below
{}% Body font
{}% Indent amount
{\itshape}% Theorem head font
{:}% Punctuation after theorem heading
{.5em}% Space after theorem heading
{}% Theorem head spec (can be left empty, meaning `normal')
\newtheorem{theorem}{Theorem}
\newtheorem{proposition}[theorem]{Proposition}
\newtheorem{lemma}[theorem]{Lemma}
\newtheorem{corollary}[theorem]{Corollary}
\newtheorem{observation}[theorem]{Observation}
\DeclareMathOperator{\Tr}{Tr}
\DeclareMathOperator{\rank}{rank}
\DeclareMathOperator{\conv}{conv}
\newcommand{\bra}[1]{\langle #1\rvert}
\newcommand{\ket}[1]{\lvert #1\rangle}
\newcommand{\mean}[1]{\langle #1\rangle}
\newcommand{\braket}[2]{\langle #1\vert #2\rangle}
\newcommand{\ketbra}[1]{\ket{#1}\bra{#1}}
\newcommand{\abs}[1]{\lvert #1\rvert}
\newcommand{\norm}[1]{\lVert #1\rVert}
\newcommand{\vect}[1]{\bm{#1}}
\newcommand{\id}{\mathrm{id}}
\newcommand{\I}{\mathds{1}}
\newcommand{\FA}{\mathrm{\quad\forall~}}
\newcommand{\Sep}{\mathrm{SEP}}
\newcommand{\AME}{\mathrm{AME}}
\newcommand{\maxover}[1][]{\underset{#1}{\mathrm{max}}}
\newcommand{\minover}[1][]{\underset{#1}{\mathrm{min}}}
\newcommand{\findover}[1][]{\underset{#1}{\mathrm{find}}}
\newcommand{\subto}{\mathrm{~s.t.}}
\newcommand{\vl}{\bm{\lambda}}
\newcommand{\bI}{{I^c}}
\newcommand{\cC}{\mathcal{C}}
\newcommand{\cP}{\mathcal{P}}
\newcommand{\dC}{\mathds{C}}
\newcommand{\dR}{\mathds{R}}
\newcommand{\Appendix}{Appendix}
\newcommand{\Appendices}{Appendices}
\newcommand{\AEq}{Eq.}
\newcommand{\AEqs}{Eqs.}
\newcommand{\AEquation}{Equation}
\newcommand{\AEquations}{Equations}
\newcommand{\MEq}{Eq.}
\newcommand{\MEqs}{Eqs.}
\begin{document}

\title{A complete hierarchy for the pure state marginal problem in quantum 
mechanics}

\author{Xiao-Dong Yu}
% \email{xiao-dong.yu@uni-siegen.de}
\affiliation{Naturwissenschaftlich-Technische Fakult\"at, Universit\"at Siegen,
Walter-Flex-Str. 3, D-57068 Siegen, Germany}

\author{Timo Simnacher}
\affiliation{Naturwissenschaftlich-Technische Fakult\"at, Universit\"at Siegen,
Walter-Flex-Str. 3, D-57068 Siegen, Germany}

\author{Nikolai Wyderka}
\affiliation{Naturwissenschaftlich-Technische Fakult\"at, Universit\"at Siegen,
Walter-Flex-Str. 3, D-57068 Siegen, Germany}
\affiliation{Institut f\"ur Theoretische Physik III,
Heinrich-Heine-Universit\"at D\"usseldorf,
Universit\"atsstr. 1, D-40225 D\"usseldorf, Germany}

\author{H. Chau Nguyen}
\affiliation{Naturwissenschaftlich-Technische Fakult\"at, Universit\"at Siegen,
Walter-Flex-Str. 3, D-57068 Siegen, Germany}

\author{Otfried G\"uhne}
\affiliation{Naturwissenschaftlich-Technische Fakult\"at, Universit\"at Siegen,
Walter-Flex-Str. 3, D-57068 Siegen, Germany}

\date{\today}

\begin{abstract}
  Clarifying the relation between the whole and its parts is crucial for many 
  problems in science. In quantum mechanics, this question manifests itself in 
  the quantum marginal problem, which asks whether there is a global pure 
  quantum state for some given marginals. This problem arises in many contexts, 
  ranging from quantum chemistry to entanglement theory and quantum error 
  correcting codes. In this paper, we prove a correspondence of the marginal 
  problem to the separability problem.  Based on this, we describe a sequence 
  of semidefinite programs which can decide whether some given marginals are 
  compatible with some pure global quantum state. As an application, we prove 
  that the existence of multiparticle absolutely maximally entangled states for 
  a given dimension is equivalent to the separability of an explicitly given 
  two-party quantum state.  Finally, we show that the existence of quantum 
  codes with given parameters can also be interpreted as a marginal problem, 
  hence, our complete hierarchy can also be used.
\end{abstract}

\maketitle

%%%%%%%%%%%%%%%%%%%%%%%%%%%%%%%%%%%%%%%%%%%%%%%%%%%%%%%%%%%%%%%%%%%%%%%%%%%%%
\section{Introduction}
% \vspace{2em}
% \noindent\textbf{\large Introduction}\\
%%%%%%%%%%%%%%%%%%%%%%%%%%%%%%%%%%%%%%%%%%%%%%%%%%%%%%%%%%%%%%%%%%%%%%%%%%%%%
%
For a given multiparticle quantum state $\ket{\varphi}$
it is straightforward to compute its marginals or reduced 
density matrices on some subsets of the particles. The reverse 
question, whether a given set of marginals is compatible with 
a global pure state, is, however, not easy to decide. Still, 
it is at the heart of many problems in quantum physics. Already 
in the early days it was a key motivation for Schr\"odinger to study 
entanglement \cite{Schroedinger1935}, and it was recognized as a central 
problem in quantum chemistry \cite{Coleman1963}. There, often additional 
constraints play a role, e.g., if one considers fermionic systems.  Then, the 
anti-symmetry leads to additional constraints on the marginals, generalizing 
the Pauli principle \cite{Klyachko2006,Schilling2015}.  A variation of the 
marginal problem is the question whether or not the marginals determine the 
global state uniquely or not 
\cite{Linden.etal2002,Sawicki.etal2013,Wyderka.etal2017}.  This is relevant in 
condensed matter physics, where one may ask whether a state is the unique 
ground state of a local Hamiltonian \cite{Huber.Guehne2016,Karuvade.etal2019}.  
Many other cases, such as marginal problems for Gaussian and symmetric states 
\cite{Eisert.etal2008,Aloy.etal2020} and applications in quantum correlation 
\cite{Walter.etal2013}, quantum causality \cite{Chaves.etal2015}, and 
interacting quantum many-body systems 
\cite{Schilling.etal2020,Maciazek.etal2020} have been studied.

With the emergence of quantum information processing, various 
specifications of the marginal problem moved into the center 
of attention. In entanglement theory a pure two-particle state 
is maximally entangled, if the one-particle marginals are maximally 
mixed. Furthermore, absolutely maximally entangled (AME) states are multiparticle 
states that are maximally entangled for any bipartition. This makes them 
valuable ingredients for quantum information protocols 
\cite{Helwig.etal2012,Helwig.Cui2013}, but it turns out that AME states do not 
exist for arbitrary dimensions, as not always global states with the desired 
mixed marginals can be found 
\cite{Scott2004,Goyeneche.etal2015,Huber.etal2017,Huber.etal2018}. In fact, 
also states obeying weaker conditions, where a smaller number of marginals 
should be maximally mixed, are of fundamental interest, but in general it is 
open when such states exist \cite{Bryan.etal2019,Raissi.etal2020,Grassl2007}.  
More generally, the construction of quantum error correcting codes, which 
constitute fundamental building blocks in the design of quantum computer 
architectures \cite{Ladd.etal2010,Preskill2018,Arute.etal2019},
essentially amounts to the identification of subspaces of the total Hilbert 
space, where all states in this space obey certain marginal constraints. This 
establishes a connection to the AME problem, which consequently was announced 
to be one of the central problems in quantum information theory 
\cite{Horodecki.etal2020}.

%%%%%%%%%%%%%%%%%%%%%%%%%%%%%%%%%%%%%%%%%%%%%%%%%%%%%%%%%%%%%%%%%%%%%%%%%%%%%
\begin{figure}[t]
  \centering
  \includegraphics[width=.45\textwidth]{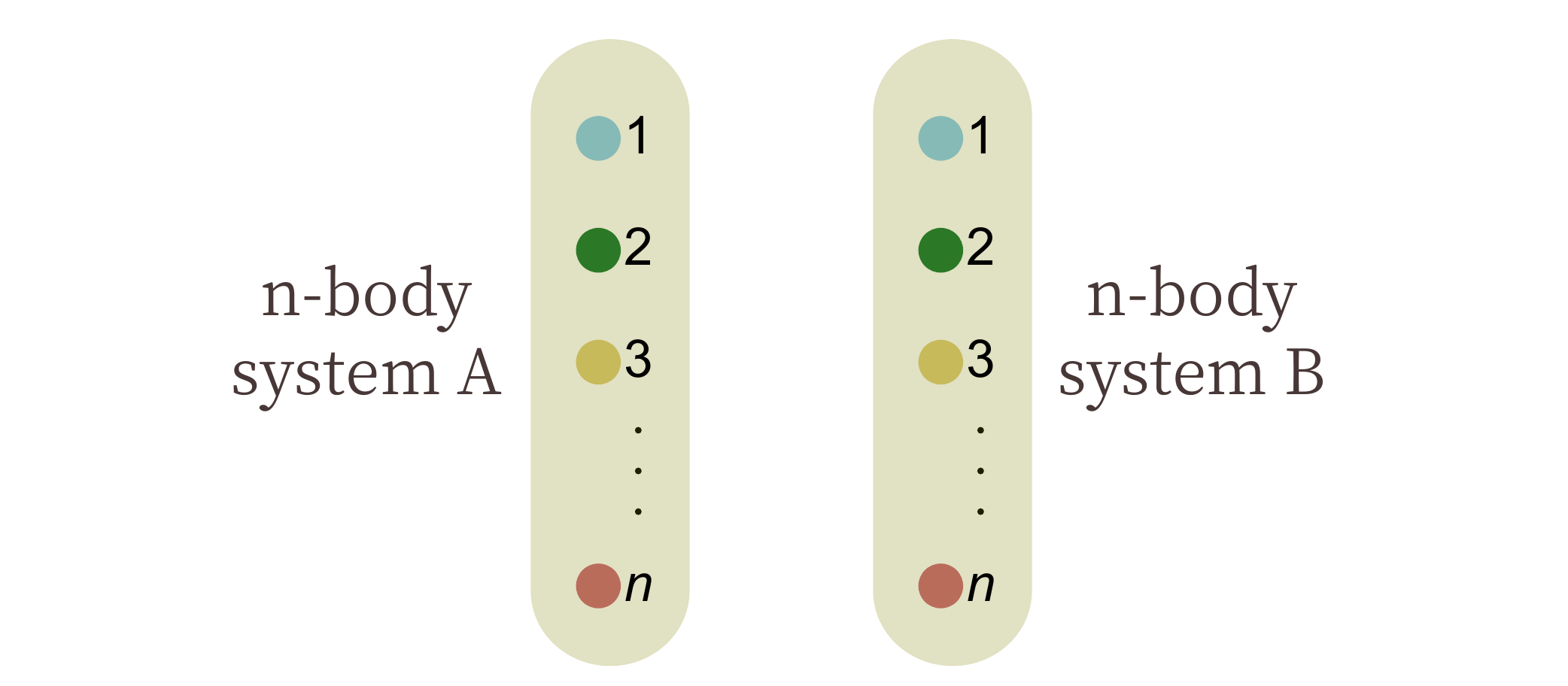}
  \caption{\textbf{Two-party extension for the marginal problem.} In the 
    marginal problem one aims to characterize the pure states $\ket{\varphi}$ 
    on $n$ particles, which are compatible with given marginals. The key idea 
    of our approach is to drop the purity constraint and to consider mixed 
    states $\rho$ with the given marginals. Then, the purity is enforced by 
  considering a two-party extension~$\Phi_{AB}$.}
  \label{fig:twocopy}
\end{figure}
%%%%%%%%%%%%%%%%%%%%%%%%%%%%%%%%%%%%%%%%%%%%%%%%%%%%%%%%%%%%%%%%%%%%%%%%%%%%%

In this paper, we rewrite the marginal problem as an optimization problem over 
separable states. Here and in the following, the term marginal problem usually 
refers to the pure state marginal problem in quantum mechanics. This rewriting 
allows us to transform the non-convex and thus intractable purity constraint 
into a complete hierarchy of conditions for a set of marginals to be compatible 
with a global pure state.  Each step is given by a semidefinite program (SDP), 
the conditions become stronger with each level, and a set of marginals comes 
from a global state, if and only if all steps are passed. There are at least 
two advantages of writing the marginal problem as an SDP hierarchy: First, the 
symmetry in the physical problem can be directly incorporated to drastically 
simplify the optimization (or feasibility) problem.  Second, many known 
efficient and reliable algorithms are known for solving SDPs 
\cite{Boyd.Vandenberghe2004}, which is in stark contrast to nonconvex 
optimization.
To show the effectiveness of our method, we consider the existence problem of 
AME states. By employing the symmetry, we show that an AME state for a given 
number of particles and dimension exists, if and only if a specific two-party 
quantum state is separable. In fact, this allows us to reproduce nearly all 
previous results on the AME problem \cite{Huber.Wyderka2018} with only few 
lines of calculation. Finally, we show that our approach can also be extended 
to study the existence problem of quantum codes.

%%%%%%%%%%%%%%%%%%%%%%%%%%%%%%%%%%%%%%%%%%%%%%%%%%%%%%%%%%%%%%%%%%%%%%%%%%%%%
\section{Results}
% \vspace{2em}
% \noindent\textbf{\large Results}\\
\textbf{Connecting the marginal problem with the separability problem.}
%%%%%%%%%%%%%%%%%%%%%%%%%%%%%%%%%%%%%%%%%%%%%%%%%%%%%%%%%%%%%%%%%%%%%%%%%%%%%
%
The formal definition of the marginal problem is the following: Consider
an $n$-particle Hilbert space $\mathcal{H}=\bigotimes_{i=1}^n\mathcal{H}_i$,
and let $\mathcal{I}\subset\{I\mid I\subset [n]=\{1,2,\dots,n\}\}$ be some
subsets of the particles, where the reduced states $\rho_I$  are known marginals.
Then, the problem reads
\begin{equation}
\begin{aligned}
&\findover \quad && \ket{\varphi}\\
&\subto && \Tr_{\bI}(\ket{\varphi}\bra{\varphi})=\rho_I,~I\in\mathcal{I}.
\end{aligned}
\label{eq:marginalPurity}
\end{equation}
Here, $\bI=[n]\setminus I$ denotes the complement of the set $I$. Before explaining our
approach, two facts are worth mentioning: First, if the global state $\ket{\varphi}\bra{\varphi}$ 
is not required to be pure, then the quantum marginal problem without purity 
constraint is already an SDP. Second, if the given marginals are only one-body 
marginals, that is $\mathcal{I}= \{ \{i\} \mid i \in [n] \}$, the marginals are 
non-overlapping and the problem in Eq.~\eqref{eq:marginalPurity} was solved by 
Klyachko \cite{Klyachko2004}. For overlapping marginals, however, the solution 
is more complicated, and this is what we want to discuss in this work.

The main idea of our method is to consider, for a given set of marginals, 
the compatible states and their extensions to two copies. Then, we can
formulate the purity constraint using an SDP. First, let us introduce some 
notation. Let $\cC$ be the set of global states (not necessarily pure) that 
are compatible with the marginals, i.e.,
\begin{equation}
\cC=\{\rho\mid\rho\ge 0,~\Tr_{\bI}(\rho)=\rho_I \FA I\in\mathcal{I}\}.
\label{eq:cC}
\end{equation}
Then, we define $\cC_2$ to be the convex hull of two copies of the 
compatible states
\begin{equation}
\cC_2=\conv\{\rho\otimes\rho\mid\rho\in\cC\}=
\Big\{\sum_\mu p_\mu\rho_\mu\otimes\rho_\mu\mid\rho_\mu\in\cC
\Big\},
\label{eq:cCN}
\end{equation}
where the  $p_\mu$ form a probability distribution.
We denote the two parties as $A$ and $B$, and each of them owns an $n$-body 
quantum system; see Fig.~\ref{fig:twocopy}. 

To impose the purity constraint, 
we take advantage of the well-known relation~\cite{Werner1989}
\begin{equation}
\Tr(V_{AB}\rho_A\otimes\rho_B)=\Tr(\rho_A\rho_B),
\label{eq:SWAP}
\end{equation}
where $\rho_A$ and $\rho_B$ are arbitrary quantum states, and $V_{AB}$ is the 
swap operator between parties $A$ and $B$, i.e.,
\begin{equation}
  V_{AB}=\sum_{i,j}\ket{j,i}\bra{i,j},
\end{equation}
which acts on a state $\Phi_{AB}=\sum_{i,j}\omega_{ijkl}\ket{i,j}\bra{k,l}$ as $V_{AB}\Phi_{AB}=\sum_{i,j}\omega_{ijkl}\ket{j,i}\bra{k,l}$.
For a state  $\Phi_{AB}$ in $\cC_2$  this implies that
\begin{equation}
  \Tr(V_{AB}\Phi_{AB})=\sum_\mu p_\mu\Tr(\rho_\mu^2)\le 1.
  \label{eq:purity}
\end{equation}
Furthermore, equality in Eq.~\eqref{eq:purity} is attained if
and only if all $\rho_\mu$ are pure states. This leads to our first
key observation: There exists a pure state in $\cC$ if and only if 
$\max_{\Phi_{AB}\in\cC_2}\Tr(V_{AB}\Phi_{AB})=1$.

What remains to be done is the characterization of the set 
$\cC_2$, then we can formulate the quantum marginal problem as an optimization 
problem over this set. This can be done by taking advantage of the separability 
property \cite{Werner1989} of the states in $\cC_2$ with respect to the 
bipartition $(A|B)$ .

\begin{theorem}
  There exists a pure quantum state $\ket{\varphi}$ that satisfies
  $\Tr_\bI(\ket{\varphi}\bra{\varphi})=\rho_I$ for all $I\in\mathcal{I}$ 
  if, and only if, the solution of the following convex optimization is equal 
  to one,
    \begin{align}
      &\maxover[\Phi_{AB}] && \Tr(V_{AB}\Phi_{AB})\label{eq:marginalPure1} \\
      &\subto && \Phi_{AB}\in\Sep,~\Tr(\Phi_{AB})=1,\label{eq:marginalPure2}\\
      &       && \Tr_{A_\bI B_\bI}(\Phi_{AB})=\rho_I\otimes\rho_I\FA 
      I\in\mathcal{I},\label{eq:marginalPure3}
    \end{align}
  where $\Sep$ denotes the set of separable states w.r.t.~the bipartition 
  $(A|B)$, $A_\bI$ denotes all subsystems $A_i$ for $i\in\bI$, and similarly 
  for $B_\bI$.
  \label{thm:marginalPure}
\end{theorem}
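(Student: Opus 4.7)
The plan is to prove both directions of the equivalence separately. For the ``only if'' direction, I would exhibit an explicit feasible point: given a pure state $\ket{\varphi}$ compatible with the marginals, set $\Phi_{AB}=\ket{\varphi}\bra{\varphi}\otimes\ket{\varphi}\bra{\varphi}$. This is a product, hence separable, state of trace one; its reduction on $A_I B_I$ equals $\rho_I\otimes\rho_I$ for every $I\in\mathcal{I}$; and the Werner identity~\eqref{eq:SWAP} gives $\Tr(V_{AB}\Phi_{AB})=\Tr[(\ket{\varphi}\bra{\varphi})^2]=1$. Together with the universal upper bound $\Tr(V_{AB}\Phi_{AB})\le \Tr(\Phi_{AB})=1$, which follows from $V_{AB}\le \mathds{1}$ (its eigenvalues are $\pm 1$), the supremum equals one.

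For the ``if'' direction, I would start from a feasible $\Phi_{AB}$ attaining the value one and invoke separability to write $\Phi_{AB}=\sum_\mu p_\mu\,\rho_\mu\otimes\sigma_\mu$ with quantum states $\rho_\mu,\sigma_\mu$ and probabilities $p_\mu$. The optimal value then reads $1=\sum_\mu p_\mu\Tr(\rho_\mu\sigma_\mu)$. Since each term obeys $\Tr(\rho_\mu\sigma_\mu)\le 1$ with equality exactly when $\rho_\mu=\sigma_\mu$ is a pure state (expanding $\rho_\mu$ spectrally and noting that $\langle i|\sigma_\mu|i\rangle\le 1$ saturates only on a common eigenvector), every summand must individually saturate. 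Hence $\Phi_{AB}=\sum_\mu p_\mu\,\ket{\varphi_\mu}\bra{\varphi_\mu}^{\otimes 2}$ for some unit vectors $\ket{\varphi_\mu}$.

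It remains to show that each such $\ket{\varphi_\mu}$ actually belongs to $\cC$. Abbreviating $\tau_\mu^I=\Tr_{\bI}(\ket{\varphi_\mu}\bra{\varphi_\mu})$, the marginal constraint~\eqref{eq:marginalPure3} becomes $\sum_\mu p_\mu\,\tau_\mu^I\otimes\tau_\mu^I=\rho_I\otimes\rho_I$. Taking a further partial trace on one copy, using $\Tr(\tau_\mu^I)=1$, yields $\sum_\mu p_\mu\,\tau_\mu^I=\rho_I$. Setting $\tau_\mu^I=\rho_I+\Delta_\mu$ with $\sum_\mu p_\mu\Delta_\mu=0$ and substituting back, the linear cross-terms cancel and one is left with $\sum_\mu p_\mu\,\Delta_\mu\otimes\Delta_\mu=0$. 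Applying the swap-operator trick once more via $\Tr[V_{AB}(\Delta_\mu\otimes\Delta_\mu)]=\Tr(\Delta_\mu^2)$ converts this to $\sum_\mu p_\mu\Tr(\Delta_\mu^2)=0$, forcing $\Delta_\mu=0$ whenever $p_\mu>0$. Each $\ket{\varphi_\mu}$ therefore has the prescribed marginals and solves~\eqref{eq:marginalPurity}.

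I expect the main technical hurdle to lie in this final step: the separable decomposition only fixes the marginals on ``average in the tensor-squared form'' $\rho_I\otimes\rho_I$, and it is not immediately obvious how to upgrade this to the pointwise condition $\tau_\mu^I=\rho_I$ required for each $\ket{\varphi_\mu}\in\cC$. The key observation is that the tensor-square constraint is strictly more rigid than a linear average; once purity of each $\ket{\varphi_\mu}$ has been established via the first use of the Werner identity, a \emph{second} application of the same identity, now to the residuals $\Delta_\mu$, turns them into a nonnegative sum of Hilbert--Schmidt norms, which pins every marginal to $\rho_I$ exactly.
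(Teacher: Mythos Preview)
Your proof is correct and follows the same overall strategy as the paper: exhibit $\ket{\varphi}\bra{\varphi}^{\otimes 2}$ for the forward direction, and for the converse use separability together with $\Tr(V_{AB}\Phi_{AB})=1$ to force $\Phi_{AB}=\sum_\mu p_\mu\ket{\varphi_\mu}\bra{\varphi_\mu}^{\otimes 2}$, then pin each marginal. The only difference is in the last step: the paper tests against all Hermitian $X$ with $\Tr(X\rho_I)=0$ to obtain $\sum_\mu p_\mu[\Tr(X\tau_\mu^I)]^2=0$, whereas you first take a partial trace to get $\sum_\mu p_\mu\tau_\mu^I=\rho_I$ and then reapply the swap identity to the residuals to obtain $\sum_\mu p_\mu\Tr(\Delta_\mu^2)=0$; the two arguments are equivalent expressions of the same tensor-square rigidity.
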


\begin{proof}
  On the one hand, if there exists a pure state $\ket{\varphi}\bra{\varphi}\in\cC$, 
  one can easily verify that 
  $\Phi_{AB}=\ket{\varphi}\bra{\varphi}\otimes\ket{\varphi}\bra{\varphi}$ 
  satisfies the constraints in 
  Eqs.~(\ref{eq:marginalPure2},\,\ref{eq:marginalPure3}) as well as 
  $\Tr(V_{AB}\Phi_{AB})=1$.

  On the other hand, if the solution of Eq.~\eqref{eq:marginalPure1} is equal 
  to one, then the separability constraint and Eq.~\eqref{eq:purity} imply that 
  $\Phi_{AB}$ must be of the form \cite{Toth.Guehne2009}
  \begin{equation}
    \Phi_{AB}=\sum_\mu p_\mu\ket{\psi_\mu}\bra{\psi_\mu}\otimes
    \ket{\psi_\mu}\bra{\psi_\mu}.
    \label{eq:valPhi}
  \end{equation}
  Writing $\Tr_\bI(\ket{\psi_\mu}\bra{\psi_\mu})=\sigma_I^{(\mu)}$,  
  the constraint in Eq.~\eqref{eq:marginalPure3} 
  implies that
  \begin{equation}
    \sum_\mu p_\mu\sigma_I^{(\mu)}\otimes\sigma_I^{(\mu)}=\rho_I\otimes\rho_I
    \FA I\in\mathcal{I}.
    \label{eq:extremeTensor}
  \end{equation}
  Without loss of generality, we assume that all $p_\mu$ are strictly positive 
  and we want to show that all $\sigma_I^{(\mu)}=\rho_I$, which will imply that 
  each $\ket{\psi_\mu}$ is a pure state with the desired marginals. Let $X$ be 
  any Hermitian matrix such that $\Tr(X\rho_I)=0$, then 
  Eq.~\eqref{eq:extremeTensor} and the relation $\Tr[(X\otimes 
  X)(\sigma\otimes\sigma)]=[\Tr(X\sigma)]^2$ imply that
  \begin{equation}
    \sum_{\mu}p_\mu\left[\Tr\left(X\sigma_I^{(\mu)}\right)\right]^2
    =[\Tr(X\rho_I)]^2
    =0.
    \label{eq:orthogonalRho}
  \end{equation}
  By noting that $\Tr(X\sigma_I^{(\mu)})$ are real numbers, we get that 
  $\Tr(X\sigma_I^{(\mu)})=0$, for all $\mu$ and all $X$ such that 
  $\Tr(X\rho_I)=0$. Thus, all $\sigma_I^{(\mu)}$ are proportional to $\rho_I$ 
  and, together with $\Tr(\rho_I)=\Tr(\sigma_I^{(\mu)})=1$, this implies that 
  $\sigma_I^{(\mu)}=\rho_I$ for all $\mu$.
\end{proof}

%%%%%%%%%%%%%%%%%%%%%%%%%%%%%%%%%%%%%%%%%%%%%%%%%%%%%%%%%%%%%%%%%%5

Before proceeding further, we would like to add a few remarks. First, in 
Theorem~\ref{thm:marginalPure} the constraint in Eq.~\eqref{eq:marginalPure3} 
can be replaced by a stronger condition
\begin{equation}
  \Tr_{A_\bI}(\Phi_{AB})=\rho_I\otimes\Tr_A(\Phi_{AB})\FA I\in\mathcal{I}.
\label{eq:strongMarginal}
\end{equation}
This is because for any (not necessarily separable) quantum state $\Phi_{AB}$ 
satisfying $\Tr(V_{AB}\Phi_{AB})=1$, Eq.~\eqref{eq:strongMarginal} implies the 
validity of Eq.~\eqref{eq:marginalPure3}.
Hence, this replacement will lead to an equivalent result as in 
Theorem~\ref{thm:marginalPure}.  However, when considering relaxations of the 
optimization in Eq.~\eqref{eq:marginalPure1} by replacing the separability 
constraint in Eq.~\eqref{eq:marginalPure2} with some entanglement criteria, 
Eq.~\eqref{eq:strongMarginal} may be
strictly stronger for certain marginal problems.

Second, if one finds that $\Tr(V_{AB}\Phi_{AB})=1$, this is equivalent to 
$V_{AB}\Phi_{AB}=\Phi_{AB}$, as the largest eigenvalue of $V_{AB}$ is one.  
Physically, this means that $\Phi_{AB}$ is a two-party state acting on the 
symmetric subspace only. Hence, Theorem~\ref{thm:marginalPure} is also 
equivalent to the feasibility problem
\begin{align}
    &\findover && \Phi_{AB}\in\Sep \label{eq:marginalPureFeasibility1}\\
    &\subto && V_{AB}\Phi_{AB}=\Phi_{AB},~\Tr(\Phi_{AB})=1,\label{eq:marginalPureFeasibility2}\\
    &       && \Tr_{A_\bI}(\Phi_{AB})=\rho_I\otimes\Tr_A(\Phi_{AB})\FA 
    I\in\mathcal{I}. \label{eq:marginalPureFeasibility3}
\end{align}
Furthermore, any feasible state $\Phi_{AB}$ can be used for constructing the 
global state $\ket{\varphi}$ with the desired marginals, as the proof of 
Theorem~\ref{thm:marginalPure} implies any pure state in the separable 
decomposition of $\Phi_{AB}$ can give a desired global state.

Third, the separability condition in the optimization 
Eq.~\eqref{eq:marginalPure2} is usually not easy to characterize, hence 
relaxations of the problem need to be considered.  The first candidate is the 
positive partial transpose (PPT) criterion \cite{Peres1996,Horodecki.etal1996}, 
which is an SDP relaxation of the optimization in Eq.~\eqref{eq:marginalPure1}.  
The PPT relaxation provides a pretty good approximation when the local 
dimension and the number of parties are small. In the following, inspired by 
the symmetric extension criterion \cite{Doherty.etal2002}, we propose 
a multi-party extension method and obtain a complete hierarchy for the marginal 
problem.

%%%%%%%%%%%%%%%%%%%%%%%%%%%%%%%%%%%%%%%%%%%%%%%%%%%%%%%%%%%%%%%%%%%%%%%%%%%%%
% \section{The hierarchy for the marginal problem}
\vspace{1em}
\noindent\textbf{The hierarchy for the marginal problem.}
%%%%%%%%%%%%%%%%%%%%%%%%%%%%%%%%%%%%%%%%%%%%%%%%%%%%%%%%%%%%%%%%%%%%%%%%%%%%%
%
%%%%%%%%%%%%%%%%%%%%%%%%%%%%%%%%%%%%%%%%%%%%%%%%%%%%%%%%%%%%%%%%%%%%%%%%%%%%%
\begin{figure}
  \centering
  \includegraphics[width=.45\textwidth]{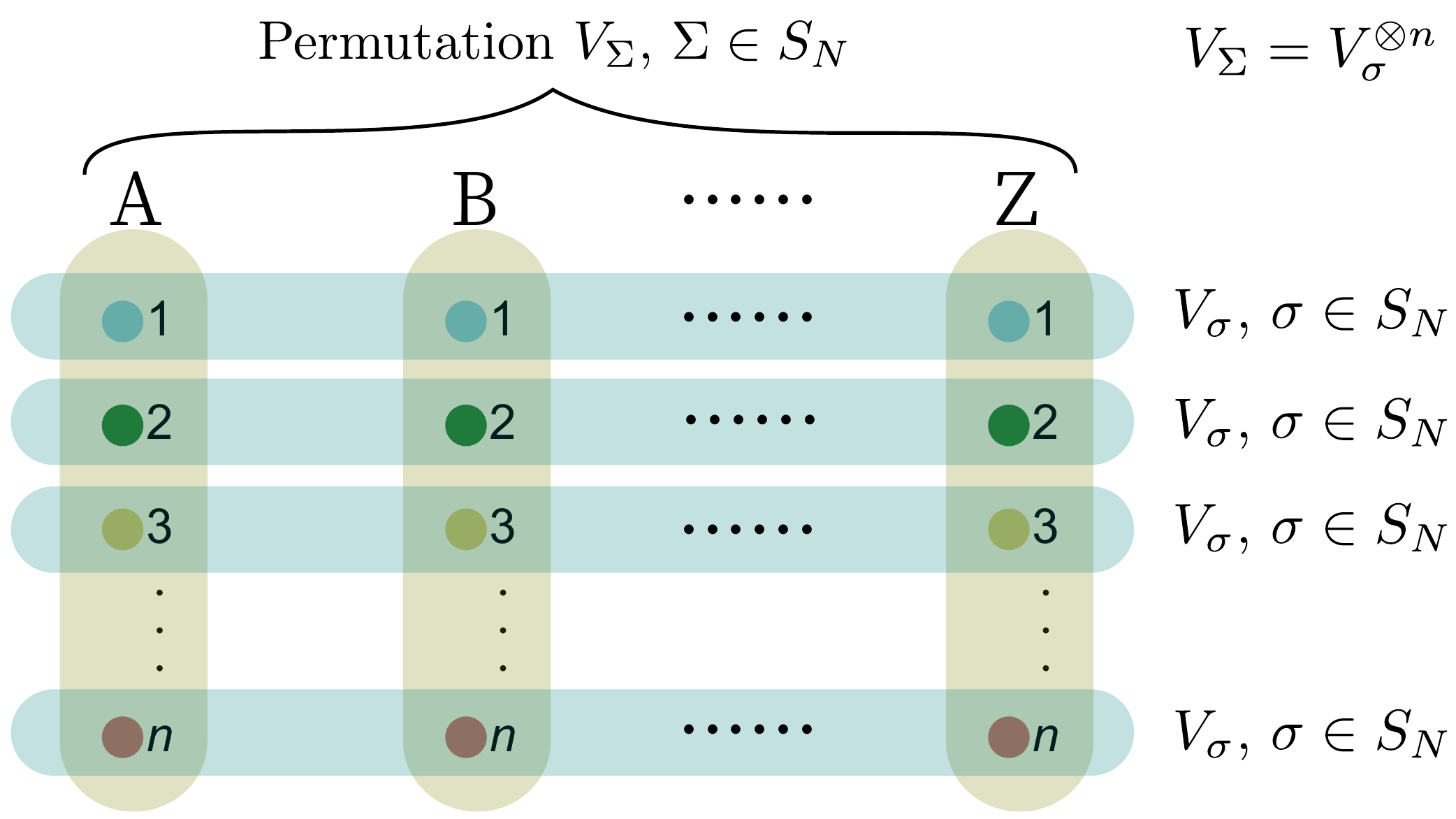}
  \caption{\textbf{Complete hierarchy for the marginal problem.} In order to 
    formulate the hierarchy for the marginal problem, one extends the two 
    copies in Fig.~\ref{fig:twocopy} to an arbitrary number of copies $N$. If 
    the marginal problem has a solution $\ket{\varphi}$, then there are 
    multi-party extensions $\Phi_{AB\cdots Z}$
    in the symmetric subspace specified by $V_\Sigma=V_\sigma^{\otimes n}$ for 
  any number of copies, obeying some semidefinite constraints.}
  \label{fig:Ncopy}
\end{figure}
%%%%%%%%%%%%%%%%%%%%%%%%%%%%%%%%%%%%%%%%%%%%%%%%%%%%%%%%%%%%%%%%%%%%%%%%%%%%%
%
In order to generalize Theorem~\ref{thm:marginalPure} we first need to extend 
$\cC_2$ in Eq.~(\ref{eq:cCN}) from two to an arbitrary number of copies of 
$\rho$. That is, we define $\cC_N=\conv\{\rho^{\otimes N} \mid \rho\in\cC\}$.
Second, we introduce the notion of the symmetric subspace. We denote the 
$N$ parties as $A,B,\dots,Z$, and each of them owns an $n$-body quantum system. 
For any $\mathcal{H}^{\otimes N}:=\mathcal{H}_A\otimes\mathcal{H}_B\otimes\cdots\otimes\mathcal{H}_Z$, 
the symmetric subspace is defined as
\begin{equation}
  \left\{\ket{\Psi}\in\mathcal{H}^{\otimes N}~\Big|~
  V_{\Sigma}\ket{\Psi}=\ket{\Psi} \FA \Sigma\in S_N\right\},
  \label{eq:symmetricVector}
\end{equation}
where $S_N$ is the permutation group over $N$ symbols and $V_{\Sigma}$ are the 
corresponding operators on the $N$ parties $A,B,\dots,Z$; see 
Fig.~\ref{fig:Ncopy}.  Let $P_N^+$ denote the orthogonal projector onto the 
symmetric subspace of $\mathcal{H}^{\otimes N}$.  $P_N^+$ can be explicitly 
written as
\begin{equation}
  P_N^+=\frac{1}{N!}\sum_{\Sigma\in S_N}V_\Sigma.
  \label{eq:symmetricProjector}
\end{equation}
In particular, for two parties we have the well-known relation 
$P_2^+=(\I_{AB}+V_{AB})/2$, which implies that $\Tr(V_{AB}\Phi_{AB})=1$ 
if and only if $\Tr(P_2^+\Phi_{AB})=1$. Also, $V_{AB}\Phi_{AB}=\Phi_{AB}$ 
is equivalent to $P_2^+\Phi_{AB}P_2^+=\Phi_{AB}$.
Hereafter, without ambiguity, we will use $P_N^+$ to denote both the symmetric 
subspace and the corresponding orthogonal projector.

Suppose that there exists a pure state $\rho\in\cC$. It is easy to see that 
$\Phi_{AB\cdots Z}=\rho^{\otimes N}$ satisfies
\begin{align}
    &P_N^+\Phi_{AB\cdots Z}P_N^+=\Phi_{AB\cdots Z},\label{eq:kcopynecessary1}\\
    &\Phi_{AB\cdots Z}\in\Sep,~\Tr(\Phi_{AB\cdots Z})=1,\label{eq:kcopynecessary2}\\
    &\Tr_{A_\bI}(\Phi_{AB\cdots Z})=\rho_I\otimes\Tr_A(\Phi_{AB\cdots Z})\FA 
    I\in\mathcal{I}.\label{eq:kcopynecessary3}
\end{align}
Here, the separability can be understood as either full separability or 
biseparability, since they are equivalent in the symmetric subspace 
\cite{Toth.Guehne2009}. Relaxing $\Phi_{AB\cdots Z}\in\Sep$, we obtain 
a complete hierarchy for the quantum marginal problem:

\begin{theorem}
  There exists a pure quantum state $\ket{\varphi}$ that satisfies 
  $\Tr_\bI(\ket{\varphi}\bra{\varphi})=\rho_I$ for all $I\in\mathcal{I}$ if and 
  only if for all $N\ge 2$ there exists an $N$-party quantum state 
  $\Phi_{AB\cdots Z}$ such that
    \begin{align}
      &P_N^+\Phi_{AB\cdots Z}P_N^+=\Phi_{AB\cdots Z},\label{eq:marginalPureHierarchy1}\\
      &\Phi_{AB\cdots Z}\ge 0,~\Tr(\Phi_{AB\cdots Z})=1,\label{eq:marginalPureHierarchy2}\\
      &\Tr_{A_\bI}(\Phi_{AB\cdots Z})=\rho_I\otimes\Tr_A(\Phi_{AB\cdots Z})\FA 
      I\in\mathcal{I}.\label{eq:marginalPureHierarchy3}
    \end{align}
    Each step of this hierarchy is a semidefinite feasibility problem, and the conditions
    become more restrictive if $N$ increases.
\label{thm:marginalPureHierarchy}
\end{theorem}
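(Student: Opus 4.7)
The plan is to use the already-proved Theorem~\ref{thm:marginalPure} (the $N=2$, separable version) as the base and promote it to all $N\ge 2$ by a symmetric-extension argument. For the necessity direction I would simply take $\Phi_{AB\cdots Z} = (\ket{\varphi}\bra{\varphi})^{\otimes N}$; all three conditions hold by inspection, since permutation invariance is automatic for identical tensor factors, positivity and normalisation are immediate, and Eq.~(\ref{eq:marginalPureHierarchy3}) reduces to the defining property $\Tr_\bI(\ket{\varphi}\bra{\varphi})=\rho_I$.

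For sufficiency, suppose an $N$-party state $\Phi_N$ obeying Eqs.~(\ref{eq:marginalPureHierarchy1})--(\ref{eq:marginalPureHierarchy3}) exists for every $N$. I would set $\Phi_{AB}^{(N)} := \Tr_{C\cdots Z}(\Phi_N)$ and check directly that each $\Phi_{AB}^{(N)}$ satisfies every hypothesis of Theorem~\ref{thm:marginalPure} except separability: it sits in the two-party symmetric subspace (the swap of $A$ and $B$ is a permutation that fixes $\Phi_N$), is positive and normalised, and by commuting partial traces in Eq.~(\ref{eq:marginalPureHierarchy3}) obeys $\Tr_{A_\bI}(\Phi_{AB}^{(N)}) = \rho_I \otimes \Tr_A(\Phi_{AB}^{(N)})$. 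Crucially, $\Phi_N$ itself is a symmetric extension of $\Phi_{AB}^{(N)}$ to $N$ parties. By compactness of the bipartite state space I would extract a convergent subsequence $\Phi_{AB}^{(N_k)} \to \Phi_{AB}^*$, and by a diagonal argument construct, for every fixed $N_0$, an $N_0$-party symmetric extension of $\Phi_{AB}^*$. The Doherty--Parrilo--Spedalieri theorem \cite{Doherty.etal2002} (equivalently, the quantum de Finetti theorem) then forces $\Phi_{AB}^*$ to be separable across $(A|B)$; since the other constraints are closed under limits, $\Phi_{AB}^*$ satisfies all the hypotheses of Theorem~\ref{thm:marginalPure}, and the desired pure state follows.

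The two remaining claims are routine: each level is SDP feasibility because $P_N^+$ is a fixed operator and every constraint is affine in $\Phi$; and the monotonicity holds because tracing out party $Z$ from a level-$(N+1)$ feasible state preserves positivity, normalisation, the permutation symmetry on the remaining $N$ parties, and Eq.~(\ref{eq:marginalPureHierarchy3}). The hard part will be the sufficiency step: a priori the $\Phi_N$ at different levels are unrelated, so one cannot directly invoke symmetric extendibility of a single fixed state. The compactness plus diagonal-subsequence argument outlined above is therefore the technical core of the proof; as an alternative I would consider a quantitative finite de Finetti bound, which would supply the same conclusion together with explicit convergence rates for the hierarchy.
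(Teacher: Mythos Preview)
Your proposal is correct and follows essentially the same route as the paper: trace down each $\Phi_N$ to a two-party state, use compactness to pass to a limit $\Phi_{AB}^*$, argue that $\Phi_{AB}^*$ is separable, and then invoke Theorem~\ref{thm:marginalPure}. The only minor difference is that the paper applies a quantitative finite de Finetti bound (your ``alternative'') directly---producing separable approximants $\widetilde{\Phi}^N_{AB}$ with $\lVert\Phi^N_{AB}-\widetilde{\Phi}^N_{AB}\rVert\le 8D/N$ and taking the limit---whereas your primary route establishes symmetric extendibility of $\Phi_{AB}^*$ to every order and then cites DPS; these are equivalent, and in fact the diagonal argument is not even needed since each fixed $N_0$ can be handled by a separate compactness extraction.
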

The proof of Theorem~\ref{thm:marginalPureHierarchy} is shown in the Methods 
section.  Notably, we can add any criterion of full separability, e.g., the PPT 
criterion for all bipartitions, as extra constraints to the feasibility 
problem. Then, Theorem~\ref{thm:marginalPureHierarchy} still provides
a complete hierarchy for the quantum 
marginal problem. In addition, the quantum marginal problems of practical 
interest are usually highly symmetric.  These symmetries can be utilized to 
largely simplify the problems in Theorems~\ref{thm:marginalPure} and 
\ref{thm:marginalPureHierarchy}. Indeed, taking advantage of symmetries is 
usually necessary for practical applications, because the general quantum marginal 
problem is QMA-complete \cite{Liu2006,Liu.etal2007}. Notably, even for non-overlapping 
marginals, despite recent progress in 
Refs.~\cite{Buergisser.etal2017,Buergisser.etal2018,Buergisser.etal2018b}, 
it is still an open problem whether there exists a polynomial-time algorithm.  
In the following, we illustrate how symmetry can drastically simplify quantum 
marginal problems with the existence problem of AME states.

%%%%%%%%%%%%%%%%%%%%%%%%%%%%%%%%%%%%%%%%%%%%%%%%%%%%%%%%%%%%%%%%%%%%%%%%%%%%%
%\section{Absolutely maximally entangled states}
\vspace{1em}
\noindent\textbf{Absolutely maximally entangled states.}
%%%%%%%%%%%%%%%%%%%%%%%%%%%%%%%%%%%%%%%%%%%%%%%%%%%%%%%%%%%%%%%%%%%%%%%%%%%%%
%
We first recall the definition of AME states. An $n$-qudit state 
$\ket{\psi}$ is called an AME state, denoted as $\AME(n,d)$, if 
it satisfies
\begin{equation}
  \Tr_\bI(\ket{\psi}\bra{\psi})=\frac{\I_{d^r}}{d^r} \FA I\in\mathcal{I}_r,
  \label{eq:defAME}
\end{equation}
where $\mathcal{I}_r=\{I\subset [n]\mid \abs{I}=r\}$ and $r=\lfloor 
n/2\rfloor$. Thus, Eq.~\eqref{eq:marginalPureFeasibility1} implies that an 
$\AME(n,d)$ exists if and only if the following problem is feasible,
\begin{align}
    &\findover && \Phi_{AB}\in\Sep \label{eq:optAME1}\\
    &\subto && \Tr(\Phi_{AB})=1,~V_{AB}\Phi_{AB}=\Phi_{AB}, \label{eq:optAME2}\\
    &       && \Tr_{A_\bI}(\Phi_{AB})=\frac{\I_{d^r}}{d^r}
	       \otimes\Tr_A(\Phi_{AB})\FA I\in\mathcal{I}_r. \label{eq:optAME3}
  \end{align}
Direct evaluation of the problem  is usually difficult, 
because the dimension of $\Phi_{AB}$ is $d^{2n}\times d^{2n}$, which is already 
very large for the simplest cases. For instance, for the $4$-qubit case,  
the size of $\Phi_{AB}$ is $256\times 256$.

To resolve this size issue, we investigate the symmetries that 
can be used to simplify the feasibility problem. Let $\mathcal{X}$ 
denote the set of $\Phi_{AB}$ that satisfy the constraints in 
Eqs.~(\ref{eq:optAME1},\,\ref{eq:optAME2},\,\ref{eq:optAME3}). If we find 
a unitary group $G$ such that for all $g\in G$ and $\Phi_{AB}\in\mathcal{X}$ we 
have that
\begin{equation}
  g\Phi_{AB} g^\dagger\in\mathcal{X}.
  \label{eq:symmetry}
\end{equation}
Then, the convexity of $\mathcal{X}$ implies that we can add a symmetry 
constraint to the constraints in Eqs.~(\ref{eq:optAME1},\,\ref{eq:optAME2},\,%
\ref{eq:optAME3}), namely,
\begin{equation}
   g\Phi_{AB}g^\dagger=\Phi_{AB}\FA g\in G.
  \label{eq:optAMESymm}
\end{equation}

In the following, we will show that the symmetries of the set of 
AME states (if they exist for given $n$ and $d$) are restrictive 
enough to leave only a single unique candidate for $\Phi_{AB}$, for which 
separability needs to be checked. The set of $\AME(n,d)$ is invariant 
under local unitaries and permutations on the $n$ particles, so by 
Theorem~\ref{thm:marginalPure} (or by direct verification) the following two 
classes of unitaries satisfy Eq.~\eqref{eq:symmetry},
\begin{align}
  \label{eq:group1}
  &U_1\otimes\dots\otimes U_n\otimes U_1\otimes\dots\otimes U_n \FA U_i\in 
  SU(d),\\
  \label{eq:group2}
  &\pi\otimes\pi \FA \pi\in S_n,
\end{align}
where $\pi=\pi(A_1,A_2,\dots,A_n)=\pi(B_1,B_2,\dots,B_n)$ denotes the 
permutation operators on $\mathcal{H}_A$ and $\mathcal{H}_B$.
Note that the $U_i$ in Eq.~\eqref{eq:group1} can be different.

First, let us view $V_{AB}$ and $\Phi_{AB}$ as $V_{12\dots n}$ and 
$\Phi_{12\dots n}$, where $i$ labels the subsystems $A_iB_i$.
Hereafter, without ambiguity, we will omit the subscripts of
\begin{equation}
  \I:=\I_{d^2},\quad V:=V_{A_iB_i},
  \label{eq:shortForm}
\end{equation}
for simplicity. From this perspective, $V_{AB}$ can be written as $V^{\otimes 
n}$, and the symmetries in Eqs.~(\ref{eq:group1},\,\ref{eq:group2}) can be 
written as $\bigotimes_{i=1}^n(U_i\otimes U_i)$ for $U_i\in SU(d)$ and 
$\Pi=\Pi(A_1B_1,A_2B_2,\dots,A_nB_n)$ for $\Pi\in S_n$, respectively.  
According to Werner's result~\cite{Werner1989}, a $(U\otimes U)$-invariant 
Hermitian operator must be of the form $\alpha\I+\beta V$ with 
$\alpha,\beta\in\dR$. This implies that a $\left[\bigotimes_{i=1}^n(U_i\otimes 
U_i)\right]$-invariant state must be  a linear combination of operators of the 
form
\begin{equation}
  \bigotimes_{i=1}^n(\alpha_i\I+\beta_iV)
  \FA\alpha_i,\beta_i\in\dR.
  \label{eq:linearSpan}
\end{equation}
In addition, we take advantage of the permutation symmetry under $\Pi\in S_n$ 
to write any invariant $\Phi_{AB}$ as
\begin{equation}
  \Phi_{AB}=\sum_{i=0}^nx_i\cP\{V^{\otimes i}\otimes\I^{\otimes (n-i)}\},
  \label{eq:PhiSymm}
\end{equation}
where $\cP$ represents the sum over all possible permutations that give different terms, 
e.g., $\cP\{V\otimes\I\otimes\I\}=V\otimes\I\otimes\I+
\I\otimes V\otimes\I+\I\otimes\I\otimes V$.

Inserting this ansatz in Eqs.~(\ref{eq:optAME2},\,\ref{eq:optAME3}) one can 
show by brute force calculation that the $x_i$ are uniquely determined and 
given by
\begin{equation}
  x_i=\frac{(-1)^i}{(d^2-1)^n}\sum_{l=0}^n\sum_{k=0}^l
  \frac{(-1)^l\binom{i}{k}\binom{n-i}{l-k}}
  {\min\{d^{i+2l-2k}, d^{n+i-2k}\}},
  \label{eq:xi}
\end{equation}
where we use the convention that $\binom{i}{j}=0$ when $j<0$ or $j>i$;
see \Appendix~\ref{app:uniqueness} for details. This means that the two-party 
extension under the symmetries is independent of the specific AME state, which 
is an interesting structural result considering that there exist even infinite 
families of $\AME(n,d)$ states that are not SLOCC equivalent 
\cite{Burchardt.Raissi2020}.  Together with Theorem~\ref{thm:marginalPure}, 
this result implies that an AME state exists if and only if $\Phi_{AB}$ is 
a separable quantum state.

\begin{theorem}
  An $\AME(n,d)$ state exists if and only if the operator $\Phi_{AB}$ defined 
  by Eqs.~(\ref{eq:PhiSymm},\,\ref{eq:xi}) is a separable state w.r.t. the 
  bipartition $(A|B)=(A_1A_2\dots A_n|B_1B_2\dots B_n)$.
  \label{thm:AME}
\end{theorem}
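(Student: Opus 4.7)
My plan is to deduce this theorem from Theorem~\ref{thm:marginalPure} by pushing the symmetries of the AME constraints hard enough to leave only a single candidate $\Phi_{AB}$, after which existence reduces to separability of that candidate.

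First I would specialize Theorem~\ref{thm:marginalPure} to the marginals $\rho_I = \I_{d^r}/d^r$ with $I\in\mathcal{I}_r$. By the remark about Eq.~\eqref{eq:strongMarginal}, this gives that an $\AME(n,d)$ exists if and only if the set $\mathcal{X}$ defined by Eqs.~(\ref{eq:optAME1}--\ref{eq:optAME3}) is non-empty. Next I would check that $\mathcal{X}$ is invariant under the group $G$ generated by Eqs.~(\ref{eq:group1},\ref{eq:group2}): each generator acts as a local unitary across the bipartition $(A|B)$, hence preserves separability; it commutes with $V_{AB}=V^{\otimes n}$; it preserves the trace; and, because $U\I U^\dagger=\I$ and $\pi$ permutes sites equally on $A$ and $B$, it maps the maximally mixed marginal constraints to themselves. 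Compactness of $G$ and convexity of $\mathcal{X}$ then let me Haar-average any feasible $\Phi$ to produce a $G$-invariant feasible $\bar\Phi\in\mathcal{X}$, so $\mathcal{X}\ne\emptyset$ iff there is a $G$-invariant separable $\Phi_{AB}$ satisfying Eqs.~(\ref{eq:optAME2},\ref{eq:optAME3}).

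Next I would apply Werner's $(U\otimes U)$-invariance result site by site: invariance under the local-unitary group of Eq.~\eqref{eq:group1} forces $\Phi_{AB}$ to be a real linear combination of tensor products $\bigotimes_{i=1}^n(\alpha_i\I+\beta_i V)$, and the permutation symmetry under $\Pi\otimes\Pi$ of Eq.~\eqref{eq:group2} collapses this further to the ansatz in Eq.~\eqref{eq:PhiSymm}, with $n+1$ real unknowns $x_0,\dots,x_n$. With this ansatz in hand, I would substitute into Eqs.~(\ref{eq:optAME2},\ref{eq:optAME3}). The normalization $\Tr(\Phi_{AB})=1$ and the symmetry constraint $V_{AB}\Phi_{AB}=\Phi_{AB}$ give linear relations among the $x_i$, while the marginal constraint $\Tr_{A_\bI}(\Phi_{AB})=(\I_{d^r}/d^r)\otimes\Tr_A(\Phi_{AB})$ — using $\Tr_{A_iB_i}(\I)=d^2\I$ and $\Tr_{A_iB_i}(V)=\I$ on the traced sites — produces the rest. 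I would argue that the resulting linear system on the $x_i$ is nonsingular, so it admits a unique solution, which by the explicit computation referenced in Appendix~\ref{app:uniqueness} is exactly Eq.~\eqref{eq:xi}.

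Once uniqueness is in place, the theorem follows immediately: if the unique candidate $\Phi_{AB}$ is separable then it is a feasible point witnessing $\mathcal{X}\ne\emptyset$ and thus the existence of an $\AME(n,d)$; conversely, if an $\AME(n,d)$ exists, then $\mathcal{X}\ne\emptyset$ forces this same unique $\Phi_{AB}$ (by the averaging and uniqueness arguments) to be separable. The main obstacle in the plan is Step~4, the uniqueness and closed-form evaluation of the $x_i$: one must carefully track which combination of swap and identity operators survives after tracing out $r$ sites on the $A$ side only, which leads to the $\min\{d^{i+2l-2k},d^{n+i-2k}\}$ factor in Eq.~\eqref{eq:xi}. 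Modulo this combinatorial bookkeeping, the rest is a direct consequence of Theorem~\ref{thm:marginalPure} combined with symmetrization and Werner's theorem.
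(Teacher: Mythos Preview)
Your proposal is correct and follows essentially the same route as the paper: specialize Theorem~\ref{thm:marginalPure} to the AME marginals, use convexity of the feasible set $\mathcal{X}$ under the group in Eqs.~(\ref{eq:group1},\,\ref{eq:group2}) to restrict to $G$-invariant $\Phi_{AB}$, apply Werner's $(U\otimes U)$ result plus permutation symmetry to obtain the ansatz Eq.~\eqref{eq:PhiSymm}, and then show the linear constraints pin down the $x_i$ uniquely as in Eq.~\eqref{eq:xi} (the paper does this last step via a dual-basis computation in Appendix~\ref{app:uniqueness}, which you reference). A minor slip: your partial-trace identities should read $\Tr_{A_i}(\I_{A_iB_i})=d\,\I_{B_i}$ and $\Tr_{A_i}(V_{A_iB_i})=\I_{B_i}$, since the constraint in Eq.~\eqref{eq:optAME3} traces only over $A_\bI$, not $A_\bI B_\bI$.
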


To check the separability of $\Phi_{AB}$, we first consider the positivity 
condition and the PPT condition. It is easy to see that $\Phi_{AB}$ can be 
written as
\begin{equation}
  \Phi_{AB}=\sum_{i=0}^np_i\cP\left\{P_+^{\otimes(n-i)}\otimes P_-^{\otimes 
  i}\right\},
  \label{eq:PhiSymAsy}
\end{equation}
and $\Phi_{AB}^{T_B}$ can be written as
\begin{equation}
  \Phi_{AB}^{T_B}=\sum_{i=0}^nq_i\cP\left\{P_\phi^{\otimes(n-i)}\otimes 
  P_\perp^{\otimes i}\right\},
  \label{eq:PhiTSymAsy}
\end{equation}
where
\begin{equation}
  P_\pm=\frac{1}{2}(\I\pm V),
  ~P_\phi=\ket{\phi^+}\bra{\phi^+},
  ~P_\perp=\I-P_\phi,
    \label{eq:ppmphi}
\end{equation}
with $\ket{\phi^+}=\frac{1}{\sqrt{d}}\sum_{k=1}^d\ket{k}\ket{k}$.
Here $p_i$ and $q_i$ are the eigenvalues of $\Phi_{AB}$ and $\Phi_{AB}^{T_B}$, 
respectively. Then, we can simplify the positivity condition $\Phi_{AB}\ge 0$ 
and the PPT condition $\Phi_{AB}^{T_B}\ge 0$ to
\begin{align}
  \label{eq:consPosPPT1}
  &\sum_{l=0}^n\sum_{k=0}^l\frac{(-1)^k\binom{i}{k}\binom{n-i}{l-k}}
  {\min\{d^l,d^{n-l}\}}\ge 0,\\
  \label{eq:consPosPPT2}
  &\sum_{k=0}^i\frac{(-1)^k\binom{i}{k}}
  {\min\{d^{2(n+k-i)},d^n\}}\ge 0,
\end{align}
for all $i=0,1,2,\dots,n$. Note that the latter inequality is trivial for $i\le 
r$.

The explicit form of $p_i$ and $q_i$ and the proof of the conditions in 
Eqs.~(\ref{eq:consPosPPT1},\,\ref{eq:consPosPPT2}) are shown in 
\Appendix~\ref{app:positivity}.  The positivity and PPT conditions can already 
rule out the existence of many AME states.  Actually, they can reproduce all 
the known nonexistence results \cite{Huber.Wyderka2018} except $\AME(7,2)$ 
\cite{Huber.etal2017}.  To get a higher-order approximation, we provide 
a general framework for performing the symmetric extension in 
\Appendices~\ref{app:SEGen} and \ref{app:SEAME}.

As the open problem of the existence of $\AME(4,6)$ is of particular interest 
in the quantum 
information community \cite{Horodecki.etal2020,OQP35}, we explicitly express it 
as the following corollary.
\begin{corollary}
  An $\AME(4,6)$ state exists if and only if the quantum state
  \begin{equation}
    \Phi_{AB}= \frac{1}{2\cdot 6^4} \Big(
    \frac{P_+^{\otimes 4}}{343}
    +\frac{\cP\big\{P_+^{\otimes 2}
    \otimes P_-^{\otimes 2}\big\}}{315}
    +\frac{P_-^{\otimes 4}}{375} \Big),
    \label{eq:Phi46}
  \end{equation}
  is separable, or equivalently,
  \begin{equation}
    \Phi_{AB}^{T_B}= \frac{1}{6^4} \Big(
    P_\phi^{\otimes 4}
    +\frac{\cP\big\{P_\phi\otimes
    P_\perp^{\otimes 3}\big\}}{35^2}
    +\frac{33 P_\perp^{\otimes 4}}{35^3} \Big),
    \label{eq:Phi46T}
  \end{equation}
  is separable w.r.t.  bipartition $(A|B)$.
  \label{cor:AME46}
\end{corollary}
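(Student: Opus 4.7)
The plan is to invoke Theorem~\ref{thm:AME} directly for $n=4$, $d=6$: an $\AME(4,6)$ state exists if and only if the operator $\Phi_{AB}$ produced by Eqs.~\eqref{eq:PhiSymm} and \eqref{eq:xi} is separable across $(A|B)$. So the entire content of the corollary is the explicit evaluation of $\Phi_{AB}$ (and its partial transpose) at these values of $n$ and $d$, together with a change of basis from $\{V^{\otimes i}\otimes \I^{\otimes(n-i)}\}$ to the projector bases $\{P_+^{\otimes (n-i)}\otimes P_-^{\otimes i}\}$ and $\{P_\phi^{\otimes (n-i)}\otimes P_\perp^{\otimes i}\}$.

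First I would evaluate the coefficients $x_0,x_1,x_2,x_3,x_4$ from Eq.~\eqref{eq:xi} with $n=4$, $d=6$. This is a finite double sum and is purely mechanical. From these $x_i$, I would then pass to the $p_i$ and $q_i$ coefficients appearing in Eqs.~\eqref{eq:PhiSymAsy} and \eqref{eq:PhiTSymAsy} by using $\I = P_+ + P_-$, $V = P_+ - P_-$ on the $\Phi_{AB}$ side, and the analogous expansion of $\I$ and $V$ in the $\{P_\phi,P_\perp\}$ basis (using $V^{T_B} = d\, P_\phi$ and $\I^{T_B}=\I = P_\phi + P_\perp$) on the $\Phi_{AB}^{T_B}$ side. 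In both cases the permutation symmetry built into \eqref{eq:PhiSymm} immediately groups terms according to how many $P_-$ (resp. $P_\perp$) factors appear, producing the $\cP\{\cdot\}$ structure shown in the corollary.

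For $\Phi_{AB}$ the computation should reproduce the three nonzero coefficients $p_0 = 1/(2\cdot 6^4 \cdot 343)$, $p_2 = 1/(2\cdot 6^4 \cdot 315)$, $p_4 = 1/(2\cdot 6^4 \cdot 375)$, while $p_1$ and $p_3$ vanish. The vanishing of $p_1$ and $p_3$ is not a coincidence: it reflects the symmetry $i\leftrightarrow n-i$ of the formula \eqref{eq:consPosPPT1} (and the structural fact that the $\AME$ marginal constraints are symmetric under exchange of a subset with its complement when $n=2r$), so I would verify this symmetry at the level of the coefficients and thereby avoid carrying out two redundant computations. The same symmetry argument explains the vanishing of the corresponding coefficients in $\Phi_{AB}^{T_B}$ and allows a cross-check between \eqref{eq:Phi46} and \eqref{eq:Phi46T} using $P_+^{T_B}=\tfrac{1}{2}(\I + d P_\phi)$, $P_-^{T_B}=\tfrac{1}{2}(\I - d P_\phi)$.

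The main obstacle is not conceptual but arithmetic: the expression \eqref{eq:xi} has alternating signs and competing denominators $d^{i+2l-2k}$ vs.\ $d^{n+i-2k}$, so the minimum inside the sum changes its branch as $(k,l)$ vary and one must be careful with bookkeeping. I would organize the sums by fixing $i$ and tabulating for each $(k,l)$ which branch of the $\min$ is active, then collect rational numbers over a common denominator. Once the $x_i$ are in hand, converting to the $(P_\pm)$ and $(P_\phi,P_\perp)$ bases is routine linear algebra, and matching the prefactors $343, 315, 375$ and $35^2, 35^3$ is the final sanity check.
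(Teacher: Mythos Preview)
Your approach is correct and is exactly what the paper does: the corollary is Theorem~\ref{thm:AME} specialized to $n=4$, $d=6$, and the explicit form is obtained by evaluating the closed formulas for the eigenvalues $p_i$ and $q_i$ (Eqs.~\eqref{eq:qp1},~\eqref{eq:qp2} in Appendix~\ref{app:positivity}) and reassembling in the $P_\pm$ and $P_\phi,P_\perp$ bases.

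One slip to flag: your claim that ``the same symmetry argument explains the vanishing of the corresponding coefficients in $\Phi_{AB}^{T_B}$'' is not right. For $\Phi_{AB}$, the odd-$i$ coefficients $p_1,p_3$ vanish because $V_{AB}\Phi_{AB}=\Phi_{AB}$ forces $\Phi_{AB}$ into the $+1$ eigenspace of $V^{\otimes n}$, and $P_+^{\otimes(n-i)}\otimes P_-^{\otimes i}$ has eigenvalue $(-1)^i$. But the partial transpose does not respect this grading: in $\Phi_{AB}^{T_B}$ it is $q_1$ and $q_2$ that vanish (hence the surviving $\cP\{P_\phi\otimes P_\perp^{\otimes 3}\}$ term in Eq.~\eqref{eq:Phi46T}, not a $P_\phi^{\otimes 2}\otimes P_\perp^{\otimes 2}$ term). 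The mechanism there is different: for $i\le r$ every $\min\{d^{2(n+k-i)},d^n\}$ in Eq.~\eqref{eq:qp2} equals $d^n$, so the sum collapses to $\sum_k(-1)^k\binom{i}{k}=0$ for $i\ge 1$. If you relied on the parity heuristic to skip computations on the $T_B$ side you would get the wrong expression, though your proposed cross-check via $P_\pm^{T_B}=\tfrac12(\I\pm dP_\phi)$ would catch it.
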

At the moment, we are unable to decide separability of these states; in
\Appendix~\ref{app:fail} we provide a short discussion of this problem.

%%%%%%%%%%%%%%%%%%%%%%%%%%%%%%%%%%%%%%%%%%%%%%%%%%%%%%%%%%%%%%%%%%%%%%%%%%%%%
%\section{Quantum codes}
\vspace{1em}
\noindent\textbf{Quantum codes.}
%%%%%%%%%%%%%%%%%%%%%%%%%%%%%%%%%%%%%%%%%%%%%%%%%%%%%%%%%%%%%%%%%%%%%%%%%%%%%
%
As another application, we show that our method can also be used to analyze the 
existence of quantum error correcting codes. For simplicity, we only consider 
pure quantum codes \cite{Rains1999} in the text; see Methods for the general 
case.  Our starting point is the fact that pure quantum codes are closely 
related to $m$-uniform states \cite{Scott2004}.  More precisely, an 
$((n,K,m+1))_d$ pure code exists if and only if there exists a $K$-dimensional 
subspace $\mathcal{Q}$ of 
$\mathcal{H}=\bigotimes_{i=1}^n\mathcal{H}_i=(\dC^d)^{\otimes n}$ such that all 
states in $\mathcal{Q}$ are $m$-uniform, i.e., for all 
$\ket{\varphi}\in\mathcal{Q}$
\begin{equation}
  \Tr_{\bI}(\ket{\varphi}\bra{\varphi})=\frac{\I_{d^m}}{d^m}
  \FA I\in\mathcal{I}_m,
  \label{eq:coding}
\end{equation}
where $\mathcal{I}_m=\{I\in [n]\mid \abs{I}=m\}$ and $\bI=[n]\setminus I$.  
The existence of $((n,1,m+1))_d$ pure codes reduces to the existence of 
$m$-uniform states, for which the methods from the last section are directly 
applicable. Here, we show that the existence of $((n,K,m+1))_d$ pure codes can 
still be written as a marginal problem if $K>1$. To do so, we define an 
auxiliary system $\mathcal{H}_0=\dC^K$ and let 
$\widetilde{\mathcal{H}}=\mathcal{H}_0\otimes\mathcal{H}
=\bigotimes_{i=0}^n\mathcal{H}_i=\dC^K\otimes(\dC^d)^{\otimes n}$. Now, we can 
write the existence of $((n,K,m+1))_d$ pure codes as a marginal problem on 
$\widetilde{\mathcal{H}}$.
\begin{lemma}
  A quantum $((n,K,m+1))_d$ pure code exists if and only if there exists 
  a quantum state $\ket{Q}$ in $\widetilde{\mathcal{H}}$ such that
  \begin{equation}
    \Tr_\bI(\ket{Q}\bra{Q})=\frac{\I_{Kd^m}}{Kd^m}
    \FA I\in\mathcal{I}_m,
    \label{eq:Qcode}
  \end{equation}
  where $\bI$ is still defined as $\{1,2,\dots,n\}\setminus I$.
  \label{lem:Qcode}
\end{lemma}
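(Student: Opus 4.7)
The plan is to prove both directions via the standard purification correspondence between subspaces whose reduced states are maximally mixed and bipartite vectors with a maximally mixed auxiliary marginal.

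For the \emph{only if} direction, suppose a pure $((n,K,m+1))_d$ code with encoding subspace $\mathcal{Q}\subset\mathcal{H}$ of dimension $K$ exists. I would pick an orthonormal basis $\{\ket{\psi_j}\}_{j=1}^K$ of $\mathcal{Q}$ and set $\ket{Q}=\frac{1}{\sqrt{K}}\sum_{j=1}^K\ket{j}_0\ket{\psi_j}$. Then $\Tr_{\bI}(\ket{Q}\bra{Q})=\frac{1}{K}\sum_{j,k}\ket{j}\bra{k}_0\otimes\Tr_{\bI}(\ket{\psi_j}\bra{\psi_k})$. The diagonal blocks are $\I_{d^m}/d^m$ by the $m$-uniformity of each $\ket{\psi_j}$. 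For the off-diagonal blocks, note that for any scalars with $\abs{\alpha}^2+\abs{\beta}^2=1$ the vector $\alpha\ket{\psi_j}+\beta\ket{\psi_k}$ also lies in $\mathcal{Q}$ and is therefore $m$-uniform. Expanding its reduced state on $I$ and using the diagonal information, one finds $\alpha\bar\beta\,\Tr_{\bI}(\ket{\psi_j}\bra{\psi_k})+\bar\alpha\beta\,\Tr_{\bI}(\ket{\psi_k}\bra{\psi_j})=0$ for all admissible $\alpha,\beta$, which forces $\Tr_{\bI}(\ket{\psi_j}\bra{\psi_k})=0$ whenever $j\ne k$. This gives exactly Eq.~\eqref{eq:Qcode}.

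For the \emph{if} direction, suppose $\ket{Q}\in\widetilde{\mathcal{H}}$ satisfies Eq.~\eqref{eq:Qcode}. Tracing out the $m$ systems in $I$ from one such marginal yields $\Tr_{[n]}(\ket{Q}\bra{Q})=\I_K/K$, so the Schmidt decomposition of $\ket{Q}$ across the cut $\{0\}\,|\,[n]$ reads $\ket{Q}=\frac{1}{\sqrt{K}}\sum_{j=1}^K\ket{j}_0\ket{\psi_j}$ for some orthonormal family $\{\ket{\psi_j}\}\subset\mathcal{H}$. Setting $\mathcal{Q}=\Sp\{\ket{\psi_j}\}_{j=1}^K$ and matching the $\ket{j}\bra{k}_0$ blocks of Eq.~\eqref{eq:Qcode} gives $\Tr_{\bI}(\ket{\psi_j}\bra{\psi_k})=\delta_{jk}\I_{d^m}/d^m$ for every $I\in\mathcal{I}_m$. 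By linearity, any $\ket{\chi}=\sum_j c_j\ket{\psi_j}\in\mathcal{Q}$ then satisfies $\Tr_{\bI}(\ket{\chi}\bra{\chi})=\sum_j\abs{c_j}^2\I_{d^m}/d^m=\I_{d^m}/d^m$, so $\mathcal{Q}$ is a $K$-dimensional $m$-uniform subspace and hence defines a pure $((n,K,m+1))_d$ code.

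There is no genuine obstacle in this argument: both directions are instances of the canonical bijection between $K$-dimensional subspaces and bipartite purifications with a $K$-dimensional maximally mixed auxiliary marginal. The only point needing a moment of care is the polarization step in the \emph{only if} direction; it is crucial there that $\mathcal{Q}$ is a \emph{linear subspace}, so that superpositions of codewords are themselves codewords and their $m$-uniformity can be used to cancel the cross terms $\Tr_{\bI}(\ket{\psi_j}\bra{\psi_k})$, which would not follow from $m$-uniformity of the basis alone.
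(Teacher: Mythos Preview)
Your proof is correct and follows essentially the same route as the paper: both directions use the canonical purification $\ket{Q}=\frac{1}{\sqrt{K}}\sum_j\ket{j}_0\ket{\psi_j}$, and both rely on the fact that superpositions of codewords are again codewords to handle the off-diagonal blocks. The paper packages the polarization step more tersely by noting that $\sqrt{K}\braket{a}{Q}\in\mathcal{Q}$ for \emph{every} $\ket{a}\in\mathcal{H}_0$ and then invoking that $\bra{a}_0(\,\cdot\,)\ket{a}_0$ for all $\ket{a}$ determines an operator, whereas you spell out the two-vector polarization explicitly; but this is only a stylistic difference, not a different argument.
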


\begin{proof}
  We first show the necessity part. Suppose that a $((n,K,m+1))_d$ code with corresponding 
  subspace $\mathcal{Q}$ exists. We define an entangled state $\ket{Q}$ in 
  $\mathcal{H}_0\otimes\mathcal{Q}\subset\widetilde{\mathcal{H}}$ as
  \begin{equation}
    \ket{Q}=\frac{1}{\sqrt{K}}\sum_{k=1}^K\ket{k}\ket{k_L},
    \label{eq:defQ}
  \end{equation}
  where $\{\ket{k}\}_{k=1}^K$ and $\{\ket{k_L}\}_{k=1}^K$ are orthonormal bases 
  for $\mathcal{H}_0$ and $\mathcal{Q}$, respectively. Then for any pure state 
  $\ket{a}$ in $\mathcal{H}_0$, $\sqrt{K}\braket{a}{Q}\in\mathcal{Q}$. Hence, 
  Eq.~\eqref{eq:coding} implies that
  \begin{equation}
    \Tr_0[\Tr_\bI(\ket{a}\bra{a}\otimes\I_{d^n}\ket{Q}\bra{Q})]
    =\frac{\I_{d^m}}{Kd^m}
    \FA I\in\mathcal{I}_m,
    \label{eq:atrace}
  \end{equation}
  for all $\ket{a}$ in $\mathcal{H}_0$, which in turn implies Eq.~\eqref{eq:Qcode}.

  To prove the sufficiency part, let $\mathcal{Q}$ be the space generated by 
  the pure states $\ket{\varphi_a}=\sqrt{K}\braket{a}{Q}$ for all $\ket{a}$ in 
  $\mathcal{H}_0$.  Then, Eq.~\eqref{eq:Qcode} implies that all 
  $\ket{\varphi_a}$ are $m$-uniform states. Furthermore, from 
  $\rank(\Tr_0(\ket{Q}\bra{Q}))=\rank(\Tr_{12\cdots n} 
  (\ket{Q}\bra{Q}))=\rank(\I_K/K)=K$ it follows that 
  $\mathcal{Q}$ is a $K$-dimensional subspace.
\end{proof}

Thus, Theorem~\ref{thm:marginalPure} gives a necessary and sufficient 
condition for the existence of $((n,K,m+1))_d$ pure codes.

\begin{proposition}
  A quantum $((n,K,m+1))_d$ pure code exists if and only if there exists 
  $\Phi_{AB}$ in $\widetilde{\mathcal{H}}_A\otimes 
  \widetilde{\mathcal{H}}_B=[\dC^K\otimes(\dC^d)^{\otimes n}]^{\otimes 2}$ such 
  that
  \begin{align}
    \label{eq:code1}
    &\Phi_{AB}\in\Sep,~V_{AB}
    \Phi_{AB}=\Phi_{AB},
    ~\Tr(\Phi_{AB})=1,\\
    \label{eq:code2}
    &\Tr_{A_\bI}(\Phi_{AB})
    =\frac{\I_{Kd^m}}{Kd^m}\otimes\Tr_A(\Phi_{AB})
    \FA I\in\mathcal{I}_m,
  \end{align}
  where $\Sep$ denotes the set of separable states w.r.t.~the bipartition 
  $(A|B)=(A_0A_1\cdots A_n|B_0B_1\cdots B_n)$, $V_{AB}$ is the swap operator 
  between $\widetilde{\mathcal{H}}_A$ and $\widetilde{\mathcal{H}}_B$, and 
  $A_\bI$ denotes all subsystems $A_i$ for $i\in\bI$.
  \label{thm:code}
\end{proposition}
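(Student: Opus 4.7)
The strategy is to compose Lemma~\ref{lem:Qcode} with the feasibility reformulation of Theorem~\ref{thm:marginalPure} given in equations~(\ref{eq:marginalPureFeasibility1}--\ref{eq:marginalPureFeasibility3}). Lemma~\ref{lem:Qcode} has already done the nontrivial work of rephrasing the existence of an $((n,K,m+1))_d$ pure code as the existence of a pure state $\ket{Q}\in\widetilde{\mathcal{H}}$ whose marginals on the subsets $\{0\}\cup I$ (for $I\in\mathcal{I}_m$) are all maximally mixed, namely $\I_{Kd^m}/(Kd^m)$. This is a pure-state marginal problem on the $(n+1)$-particle Hilbert space $\widetilde{\mathcal{H}}=\mathcal{H}_0\otimes\mathcal{H}$, to which Theorem~\ref{thm:marginalPure} applies verbatim.

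First, I would set up the two-copy extension $\widetilde{\mathcal{H}}_A\otimes\widetilde{\mathcal{H}}_B$ with bipartition $(A|B)=(A_0A_1\cdots A_n|B_0B_1\cdots B_n)$ and the swap $V_{AB}$ exchanging the two copies of the augmented system. Invoking the feasibility form of Theorem~\ref{thm:marginalPure}, the clauses in~(\ref{eq:marginalPureFeasibility1},\,\ref{eq:marginalPureFeasibility2}) become exactly~(\ref{eq:code1}), while the marginal constraint~(\ref{eq:marginalPureFeasibility3}), with each $\rho_I$ set to $\I_{Kd^m}/(Kd^m)$, becomes~(\ref{eq:code2}).

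Second, I would verify the index bookkeeping, which is the only place any care is needed. In Lemma~\ref{lem:Qcode} the symbol $\bI$ is defined as $\{1,2,\ldots,n\}\setminus I$, so in particular $0\notin\bI$; when one regards the marginal problem as living on the $(n+1)$ particles labeled $\{0,1,\ldots,n\}$, the set of ``known'' subsystems is $\{0\}\cup I$ and its complement inside $\{0,1,\ldots,n\}$ is again $\{1,\ldots,n\}\setminus I=\bI$. So the two uses of $\bI$ are consistent, and the constraint on $A_{\bI}$ in~(\ref{eq:code2}) is exactly the one produced by Theorem~\ref{thm:marginalPure} applied to the marginal problem from Lemma~\ref{lem:Qcode}. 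Chaining the two biconditionals yields the Proposition.

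There is essentially no genuine obstacle; the content of the Proposition is a direct composition of two already-established results. The only point deserving a brief remark is that, as noted after Theorem~\ref{thm:marginalPure}, the one-sided marginal condition in~(\ref{eq:code2}) automatically enforces the analogous reduction on party $B$, because $V_{AB}\Phi_{AB}=\Phi_{AB}$ from~(\ref{eq:code1}) symmetrizes the roles of $A$ and $B$; this is what makes~(\ref{eq:code2}) a legitimate stand-in for the stronger two-sided condition that one might naively have written down.
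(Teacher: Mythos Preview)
Your proposal is correct and matches the paper's own reasoning: the paper simply states that Theorem~\ref{thm:marginalPure} (in its feasibility form, Eqs.~(\ref{eq:marginalPureFeasibility1}--\ref{eq:marginalPureFeasibility3})) applied to the marginal problem of Lemma~\ref{lem:Qcode} yields the proposition. Your additional remarks on the $\bI$ bookkeeping and the $A\leftrightarrow B$ symmetrization are accurate and go slightly beyond what the paper spells out, but the underlying approach is identical.
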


Furthermore, the multi-party extension and symmetrization techniques that we 
developed for AME states can be easily adapted to the quantum error correcting 
codes. For instance, the PPT relaxation can be written as a linear program and 
the symmetric extensions can be written as SDPs. An important difference is 
that the symmetrized $\Phi_{AB}$ for quantum error correcting codes is no 
longer uniquely determined by the marginals in general. Finally, we would like 
to mention that Lemma~\ref{lem:Qcode} is of independent interest on its own.  
For example, Eq.~\eqref{eq:Qcode} implies that $Kd^m\le\sqrt{Kd^n}$, as 
$\rank(\Tr_\bI(\ket{Q}\bra{Q}))\le\sqrt{\dim(\widetilde{\mathcal{H}})}$. This 
provides a simple proof for the quantum Singleton bound 
\cite{Knill.Laflamme1997,Rains1999} $K\le d^{n-2m}$ for pure codes.

%%%%%%%%%%%%%%%%%%%%%%%%%%%%%%%%%%%%%%%%%%%%%%%%%%%%%%%%%%%%%%%%%%%%%%%%%%%%%
\section{Discussion}
% \vspace{2em}
% \noindent\textbf{\large Discussion}\\
%%%%%%%%%%%%%%%%%%%%%%%%%%%%%%%%%%%%%%%%%%%%%%%%%%%%%%%%%%%%%%%%%%%%%%%%%%%%%
%
We have shown that the marginal problem for multiparticle quantum systems
is closely related to the problem of entanglement and separability for 
two-party systems. More precisely, we have shown that the existence of
a pure multiparticle state with given marginals can be reformulated as the
existence of a two-party separable state with additional semidefinite 
constraints.  This allows for further refinements: First, one may use the 
multi-party extension technique to develop a complete hierarchy for the quantum 
marginal problem.  Second, one can use symmetries of the original marginal 
problem, to restrict the search of the two-party separable state further. For 
the AME problem, this allows us to determine a unique candidate for the state, and 
it remains to check its separability properties. Finally, the approach can be 
extended to characterize the existence of quantum codes.

Our work provides new insights in several subfields of quantum information 
theory. First, it may provide a significant step towards solving the problem
of the existence of the $\AME(4,6)$ state or quantum orthogonal Latin 
squares, a problem which has been highlighted as an outstanding problem in 
quantum information theory \cite{Horodecki.etal2020}. Second, there are already 
a variety of results on the separability problem, and in the future, these can be
used to study marginal problems in various situations. Finally, it would be
interesting to extend our work to other versions of the marginal problem, 
e.g., in fermionic systems or with a relaxed version of the purity constraint. 
We believe that our approach can also lead to progress in these cases.

%%%%%%%%%%%%%%%%%%%%%%%%%%%%%%%%%%%%%%%%%%%%%%%%%%%%%%%%%%%%%%%%%%%%%%%%%%%%%
\section{Methods}
% \label{app:theorem2}
% \vspace{2em}
% \noindent\textbf{\large Methods}\\
\noindent\textbf{Proof of Theorem~\ref{thm:marginalPureHierarchy}.}
%%%%%%%%%%%%%%%%%%%%%%%%%%%%%%%%%%%%%%%%%%%%%%%%%%%%%%%%%%%%%%%%%%%%%%%%%%%%%
%
To prove Theorem~\ref{thm:marginalPureHierarchy}, we take advantage the 
following lemma, which can be viewed as a special case of the quantum de 
Finetti theorem \cite{Christandl.etal2007}.

\begin{lemma}
  Let $\rho_N$ be an $N$-party quantum state in the symmetric subspace $P_N^+$, 
  then there exists a $k$-party quantum state
  \begin{equation}
    \sigma_k=\sum_\mu p_\mu\ket{\psi_\mu}\bra{\psi_\mu}^{\otimes k},
    \label{eq:ksep}
  \end{equation}
  i.e., a fully separable state in $P_k^+$, such that
  \begin{equation}
    \norm{\Tr_{N-k}(\rho_N)-\sigma_k}
    \le\frac{4kD}{N},
    \label{eq:deFinetti}
  \end{equation}
  where $\norm{\cdot}$ is the trace norm and $D$ is the local dimension.
  \label{lem:deFinetti}
\end{lemma}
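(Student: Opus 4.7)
The plan is to prove this as a special case of the finite quantum de Finetti theorem, following the measure-and-prepare strategy of Christandl, K\"onig, Mitchison, and Renner. Since $\rho_N$ lies in the symmetric subspace $P_N^+$ of $\mathcal{H}^{\otimes N}$, the key idea is to exhibit $\Tr_{N-k}(\rho_N)$ as the approximate outcome of coherently measuring $N-k$ of the subsystems in a suitable overcomplete basis, leaving on the remaining $k$ systems a state that is close to a mixture of pure product powers. The measurement on the traced-out systems serves as a bookkeeping device: because of the permutation invariance of $\rho_N$, measuring $N-k$ copies and then averaging is equivalent (up to a controlled error) to simply tracing them out, while at the same time forcing a product structure on what remains.

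Concretely, I would first exploit the symmetric structure of $\rho_N$ to build a purification $\ket{\Psi}$ that itself lies in the symmetric subspace of an enlarged space $(\mathcal{H}\otimes\mathcal{H}')^{\otimes N}$ with $\mathcal{H}' \cong \mathcal{H}$; such a symmetric purification always exists, since the symmetric subspace of a tensor product realizes every symmetric mixed state as a reduction of a symmetric pure one. I would then apply, on the $\mathcal{H}'$ factors of $N-k$ of the copies, the continuous POVM $\{D\,\ket{\phi}\bra{\phi}\,\md\phi\}$, where $\md\phi$ is the unitarily invariant measure on pure states and $D = \dim\mathcal{H}$; this resolves the identity on $\mathcal{H}'$. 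For an outcome tuple $(\phi_1,\dots,\phi_{N-k})$, the symmetric-subspace form of $\ket{\Psi}$ forces the post-measurement state on the remaining $k$ copies of $\mathcal{H}$ to be close to a pure tensor power $\ket{\psi(\phi_1,\dots,\phi_{N-k})}^{\otimes k}$. Averaging over outcomes with their Born probability $p(\phi_1,\dots,\phi_{N-k})$ yields, up to small trace-norm error, a state of the form $\int p(\phi)\,\ket{\psi(\phi)}\bra{\psi(\phi)}^{\otimes k}\,\md\phi$; a finite convex representation $\sigma_k = \sum_\mu p_\mu \ket{\psi_\mu}\bra{\psi_\mu}^{\otimes k}$ then follows by a Carath\'eodory argument on the finite-dimensional set of $k$-fold symmetric product states.

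The main obstacle is making the quantitative bound $\norm{\Tr_{N-k}(\rho_N)-\sigma_k}\le 4kD/N$ explicit. I would derive it via a dimension-counting / entropic argument: the symmetric subspace $P_N^+(\mathcal{H}^{\otimes N})$ has dimension $\binom{N+D-1}{D-1}$, which is only polynomial in $N$, whereas the full tensor space is exponential. A telescoping argument across the $k$ retained copies converts this dimensional mismatch into the factor $kD/N$, and the constant $4$ arises naturally when passing from fidelity to trace distance through the Fuchs--van de Graaf inequalities. The subtle point is controlling the fidelity between the actual reduced state and the measure-and-prepare approximation: this is precisely where the CKMR analysis does its work, and I would rely on it as a black box once the symmetric purification in the previous step has been set up. At that stage the lemma is exactly the standard finite quantum de Finetti statement, so the cleanest write-up is to reduce to it after Steps 1--2 rather than to reproduce the quantitative fidelity estimate in full.
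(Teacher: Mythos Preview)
The paper does not prove this lemma at all: it is stated as a known special case of the finite quantum de~Finetti theorem and simply cited to Christandl, K\"onig, Mitchison, and Renner. Your proposal recognizes this provenance and sketches exactly the CKMR measure-and-prepare argument (symmetric purification, coherent-information POVM on $N-k$ copies, dimension counting on $P_N^+$), ultimately deferring the quantitative fidelity estimate to that reference as a black box. So there is no real divergence: both you and the paper treat the lemma as an imported result, and your outline is faithful to the cited proof.
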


The necessity part of Theorem~\ref{thm:marginalPureHierarchy} is obvious.  
Hence, we only need to prove the sufficiency part, i.e., that the existence of 
an $N$-party quantum state $\Phi_{AB\cdots Z}$ for arbitrary $N$ implies the 
existence of $\ket{\varphi}$.
Let $\Phi^N_{AB}=\Tr_{C\cdots Z}(\Phi_{ABC\cdots Z})$, then $\Phi^N_{AB}$ 
satisfies
\begin{equation}
  \Tr(\Phi^N_{AB})=1,\quad
  \Tr_{A_\bI}(\Phi^N_{AB})=\rho_I\otimes\Tr_A(\Phi^N_{AB})
  \FA I\in\mathcal{I}.
  \label{eq:reducedPhi}
\end{equation}
Further, Lemma~\ref{lem:deFinetti} implies that there exist separable states
$\widetilde{\Phi}^N_{AB}$ such that
\begin{align}
  &V_{AB}\widetilde{\Phi}^N_{AB}=\widetilde{\Phi}^N_{AB},
  \label{eq:PhideFinetti1}\\
  &\norm{\Phi^N_{AB}-\widetilde{\Phi}^N_{AB}}\le\frac{8D}{N}.
  \label{eq:PhideFinetti2}
\end{align}
As the set of quantum states for any fixed dimension is compact, we can choose 
a convergent subsequence $\Phi^{N_i}_{AB}$ of the sequence $\Phi^{N}_{AB}$.  
Thus, Eq.~\eqref{eq:PhideFinetti2} implies that
\begin{equation}
  \Phi_{AB}:=\lim_{i\to+\infty}\Phi^{N_i}_{AB}
  =\lim_{i\to+\infty}\widetilde{\Phi}^{N_i}_{AB}.
  \label{eq:convergentPhi}
\end{equation}
Thus, Eqs.~(\ref{eq:reducedPhi},\,\ref{eq:PhideFinetti1}) and the fact that the 
set of separable states is closed imply that $\Phi_{AB}$ satisfies all 
constraints in Eqs.~(\ref{eq:marginalPureFeasibility1},\,%
\ref{eq:marginalPureFeasibility2},\,\ref{eq:marginalPureFeasibility3}).
Then, Theorem~\ref{thm:marginalPureHierarchy} follows directly from 
Theorem~\ref{thm:marginalPure}.

%%%%%%%%%%%%%%%%%%%%%%%%%%%%%%%%%%%%%%%%%%%%%%%%%%%%%%%%%%%%%%%%%%%%%%%%%%%%%
% \section{General quantum codes}
% \label{app:code}
\vspace{1em}
\noindent\textbf{General quantum codes.}
%%%%%%%%%%%%%%%%%%%%%%%%%%%%%%%%%%%%%%%%%%%%%%%%%%%%%%%%%%%%%%%%%%%%%%%%%%%%%
%
In general, a quantum $((n,K,m+1))_d$ code exists if and only if there exists 
a $K$-dimensional subspace $\mathcal{Q}$ of 
$\mathcal{H}=\bigotimes_{i=1}^n\mathcal{H}_i=(\dC^d)^{\otimes n}$ such that
for all $\ket{\varphi}\in\mathcal{Q}$
\begin{equation}
  \Tr_{\bI}(\ket{\varphi}\bra{\varphi})=\rho_I
  \FA I\in\mathcal{I}_m,
  \label{eq:codingImpure}
\end{equation}
where $\rho_I$ are marginals that are arbitrary but independent of 
$\ket{\varphi}$, $\mathcal{I}_m=\{I\in [n]\mid \abs{I}=m\}$, and 
$\bI=[n]\setminus I=\{1,2,\dots,n\}\setminus I$.  Similar to the case of pure 
codes, we can prove the following lemma.
\begin{lemma}
  A quantum $((n,K,m+1))_d$ code exists if and only if there exists a quantum 
  state $\ket{Q}$ in $\widetilde{\mathcal{H}}$ and marginal states $\rho_I$ 
  such that
  \begin{equation}
    \Tr_\bI(\ket{Q}\bra{Q})=\frac{\I_K}{K}\otimes\rho_I
    \FA I\in\mathcal{I}_m,
    \label{eq:QcodeImpure}
  \end{equation}
  where $\widetilde{\mathcal{H}}=\mathcal{H}_0\otimes\mathcal{H}
  =\bigotimes_{i=0}^n\mathcal{H}_i=\dC^K\otimes(\dC^d)^{\otimes n}$ and $\bI$ 
  is defined as $[n]\setminus I=\{1,2,\dots,n\}\setminus I$.
  \label{lem:QcodeImpure}
\end{lemma}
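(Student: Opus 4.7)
The plan is to follow the same two-step strategy as in the proof of Lemma~\ref{lem:Qcode}, modifying it to accommodate that the marginals $\rho_I$ are no longer required to be maximally mixed but only independent of the chosen code state.

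For the necessity direction, I would start from a $((n,K,m+1))_d$ code subspace $\mathcal{Q}\subset\mathcal{H}$ and again define
\begin{equation}
  \ket{Q}=\frac{1}{\sqrt{K}}\sum_{k=1}^{K}\ket{k}\ket{k_L},
\end{equation}
with $\{\ket{k_L}\}_{k=1}^{K}$ an orthonormal basis of $\mathcal{Q}$. Then
\begin{equation}
  \Tr_\bI(\ket{Q}\bra{Q}) = \frac{1}{K}\sum_{k,k'}\ket{k}\bra{k'}\otimes\Tr_\bI(\ket{k_L}\bra{k'_L}),
\end{equation}
so it suffices to show $\Tr_\bI(\ket{k_L}\bra{k'_L})=\delta_{kk'}\rho_I$. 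The diagonal case follows directly from $\ket{k_L}\in\mathcal{Q}$ together with Eq.~\eqref{eq:codingImpure}. For the off-diagonal case I would apply a polarization-type argument: since $(\ket{k_L}\pm\ket{k'_L})/\sqrt{2}$ and $(\ket{k_L}\pm i\ket{k'_L})/\sqrt{2}$ all lie in $\mathcal{Q}$, each of their reduced states on $I$ equals the same $\rho_I$; combining these four identities linearly kills the off-diagonal contributions $\Tr_\bI(\ket{k_L}\bra{k'_L})$. This yields Eq.~\eqref{eq:QcodeImpure}.

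For the sufficiency direction, I would start from $\ket{Q}$ satisfying Eq.~\eqref{eq:QcodeImpure} and first take an additional partial trace over the $\mathcal{H}_I$ factor on both sides, obtaining $\Tr_{12\cdots n}(\ket{Q}\bra{Q})=\I_K/K$, since $\Tr(\rho_I)=1$. This fixes the Schmidt rank of $\ket{Q}$ across the bipartition $(0\,|\,1\cdots n)$ to be exactly $K$; writing a Schmidt decomposition in the form $\ket{Q}=\frac{1}{\sqrt K}\sum_k\ket{k}\ket{v_k}$ with orthonormal $\{\ket{v_k}\}$, I would define $\mathcal{Q}:=\Sp\{\ket{v_k}\}$, which is a $K$-dimensional subspace of $\mathcal{H}$. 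It remains to verify Eq.~\eqref{eq:codingImpure} for every $\ket{\varphi}\in\mathcal{Q}$. Using the basis expansion given by Eq.~\eqref{eq:QcodeImpure}, namely $\Tr_\bI(\ket{v_k}\bra{v_{k'}})=\delta_{kk'}\rho_I$, a direct computation for $\ket{\varphi}=\sum_k c_k\ket{v_k}$ with $\sum_k|c_k|^2=1$ gives $\Tr_\bI(\ket{\varphi}\bra{\varphi})=\rho_I$, which is the required code property.

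The only genuinely new ingredient beyond Lemma~\ref{lem:Qcode} is the polarization step that extracts the off-diagonal vanishing $\Tr_\bI(\ket{k_L}\bra{k'_L})=0$ from the fact that every pure state in $\mathcal{Q}$ has the \emph{same} marginal $\rho_I$; in the pure-code case this was a very mild consequence of the maximally mixed marginal, but here it is the defining feature of a general (impure) code, so this is the step that carries the substantive content. I expect no further obstacle, since the Schmidt-rank argument and the basis-expansion computation are mechanical once the orthogonality of the $\Tr_\bI(\ket{k_L}\bra{k'_L})$ has been established.
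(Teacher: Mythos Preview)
Your proposal is correct and follows essentially the same route as the paper, which for this lemma simply says ``similar to the case of pure codes'' and implicitly adapts the proof of Lemma~\ref{lem:Qcode}. Your polarization step making $\Tr_\bI(\ket{k_L}\bra{k'_L})=0$ explicit is precisely the content hidden in the paper's ``for all $\ket{a}\in\mathcal{H}_0$'' argument, and your Schmidt-decomposition formulation of the sufficiency direction is an equivalent rephrasing of the paper's $\ket{\varphi_a}=\sqrt{K}\braket{a}{Q}$ construction.
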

If the marginals $\rho_I$ are given like in the case of pure codes, the problem 
reduces to a marginal problem. However, to ensure the existence of 
$((n,K,m+1))_d$ codes, an arbitrary set of marginals is sufficient. This makes 
the problem no longer a marginal problem, however, we can circumvent this issue 
by observing that Eq.~\eqref{eq:QcodeImpure} is equivalent to
\begin{equation}
  \Tr_0[(M_0\otimes\I_I)\Tr_\bI(\ket{Q}\bra{Q})]=0
  \FA I\in\mathcal{I}_m,
  \label{eq:QcodeImpureLinear}
\end{equation}
for all $M_0$ such that $\Tr(M_0)=0$. Moreover, we can choose an arbitrary 
basis $\mathcal{B}$ for $\{M_0\mid\Tr(M_0)=0,~M_0^\dagger=M_0\}$. Then, with 
the general result on rank-constrained optimization from 
Ref.~\cite{Yu.etal2020b}, we obtain the following theorem, and similar to the 
AME existence problem, a complete hierarchy can be constructed using the 
symmetric extension technique.
\begin{proposition}
  A quantum $((n,K,m+1))_d$ code exists if and only if there exists $\Phi_{AB}$ 
  in $\widetilde{\mathcal{H}}_A\otimes 
  \widetilde{\mathcal{H}}_B=[\dC^K\otimes(\dC^d)^{\otimes n}]^{\otimes 2}$ such 
  that
  \begin{align}
    \label{eq:codeImpure1}
    &\Phi_{AB}\in\Sep,\,V_{AB}\Phi_{AB}=\Phi_{AB},
    \,\Tr(\Phi_{AB})=1,\\
    \label{eq:codeImpure2}
    &\Tr_{A_0}\Tr_{A_\bI}[(M_{A_0}\otimes\I_{A_0^c})\Phi_{AB}]=0,
  \end{align}
  for all $I\in\mathcal{I}_m$ and $M_{A_0}\in\mathcal{B}$,
  where the $\Sep$ means the separability with respect to the bipartition 
  $(A|B)=(A_0A_1\cdots A_n|B_0B_1\cdots B_n)$, $V_{AB}$ is the swap operator 
  between $\widetilde{\mathcal{H}}_A$ and $\widetilde{\mathcal{H}}_B$, $A_\bI$ 
  denotes all subsystems $A_i$ for $i\in\bI$, and $\I_{A_0^c}$ denote the 
  identity operator on $AB\setminus A_0=A_1A_2\cdots A_nB_0B_1B_2\cdots B_n$.
  \label{thm:codeImpure}
\end{proposition}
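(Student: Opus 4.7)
The approach is to combine Lemma~\ref{lem:QcodeImpure} with the two-copy separable reformulation that proves Theorem~\ref{thm:marginalPure}, adapted to the fact that the marginals $\rho_I$ here are free rather than prescribed. First I would use Lemma~\ref{lem:QcodeImpure} to replace the existence of an $((n,K,m+1))_d$ code with the existence of a pure state $\ket{Q}\in\widetilde{\mathcal{H}}$ whose marginal on $\{0\}\cup I$ has the tensor form $\I_K/K\otimes\rho_I$ for some (unspecified) $\rho_I$ and every $I\in\mathcal{I}_m$. Next I would verify that this tensor-product constraint is equivalent to the linear constraint in Eq.~\eqref{eq:QcodeImpureLinear}: expanding $\Tr_\bI\ket{Q}\bra{Q}=\sum_{ij}\ket{i}\bra{j}_{A_0}\otimes T_{ij}$, the traceless Hermitian $M_0$ detect all off-diagonal blocks $T_{ij}$ and the differences of diagonal blocks $T_{ii}$, forcing exactly the tensor-product form. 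Since $\mathcal{B}$ is a basis for the traceless Hermitian operators on $\mathcal{H}_0$, it suffices to impose the constraint for $M_0\in\mathcal{B}$.

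Second I would lift this pure-state linear constraint to the two-copy setting as in Theorem~\ref{thm:marginalPure}. For the necessity direction, given such a $\ket{Q}$, the state $\Phi_{AB}=\ket{Q}\bra{Q}\otimes\ket{Q}\bra{Q}$ is separable, swap-symmetric, of unit trace, and a direct computation of $\Tr_{A_0}\Tr_{A_\bI}[(M_{A_0}\otimes\I_{A_0^c})\Phi_{AB}]$ factorizes it as $\Tr_0[(M_0\otimes\I_I)\Tr_\bI\ket{Q}\bra{Q}]\otimes\ket{Q}\bra{Q}_B$, which vanishes by Eq.~\eqref{eq:QcodeImpureLinear}, so Eq.~\eqref{eq:codeImpure2} holds. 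For the sufficiency direction, separability combined with swap-invariance yields, by the same argument that produced Eq.~\eqref{eq:valPhi}, a decomposition $\Phi_{AB}=\sum_\mu p_\mu\ket{\psi_\mu}\bra{\psi_\mu}^{\otimes 2}$, and the task then becomes propagating Eq.~\eqref{eq:codeImpure2} down to each individual $\ket{\psi_\mu}$, so that any one of them can play the role of $\ket{Q}$ in Lemma~\ref{lem:QcodeImpure}.

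The main obstacle is this last step. Substituting the decomposition turns Eq.~\eqref{eq:codeImpure2} into the operator identity $\sum_\mu p_\mu\,\eta_\mu(M_0)\otimes\ket{\psi_\mu}\bra{\psi_\mu}=0$ on $A_I\otimes B$, where $\eta_\mu(M_0)=\Tr_0[(M_0\otimes\I_I)\Tr_\bI\ket{\psi_\mu}\bra{\psi_\mu}]$. Unlike the fixed-marginal situation of Theorem~\ref{thm:marginalPure}, where $\sum_\mu p_\mu\sigma_I^{(\mu)}\otimes\sigma_I^{(\mu)}=\rho_I\otimes\rho_I$ is resolved via the scalar identity $\Tr[(X\otimes X)(\sigma\otimes\sigma)]=[\Tr(X\sigma)]^2$ together with positivity of each summand, here the residuals $\eta_\mu(M_0)$ carry genuine operator structure on $A_I$ and the rank-one projectors $\ket{\psi_\mu}\bra{\psi_\mu}$ need not be linearly independent, so the identity cannot be split term-by-term by elementary means. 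This is where the rank-constrained optimization framework of Ref.~\cite{Yu.etal2020b} would be invoked: it guarantees that every feasible two-copy $\Phi_{AB}$ of this form admits a pure decomposition whose components individually satisfy the linear constraints, thereby yielding an admissible $\ket{Q}$. Once $\ket{Q}$ is secured, Lemma~\ref{lem:QcodeImpure} produces the code, completing the equivalence.
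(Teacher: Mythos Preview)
Your proposal is correct and follows essentially the same route as the paper: reduce via Lemma~\ref{lem:QcodeImpure}, rewrite the tensor-product marginal condition as the linear constraint Eq.~\eqref{eq:QcodeImpureLinear}, and then appeal to the rank-constrained optimization result of Ref.~\cite{Yu.etal2020b} to pass from the separable swap-symmetric two-copy state back to a single pure $\ket{Q}$ satisfying that constraint. The paper gives no detail beyond citing Ref.~\cite{Yu.etal2020b} at exactly the point you identify as the obstacle, so your write-up is already more explicit than the original.
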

By noticing that the set of $((n,K,m+1))_d$ (pure or general) codes, or rather, 
the set of states $\ket{Q}$, is invariant under local unitaries and 
permutations on the bodies $123\cdots n$, we can assume that $\Phi_{AB}$ is 
invariant under the following two classes of unitaries
\begin{align}
  \label{eq:groupCode1}
  &U_0\otimes U_1\otimes\dots\otimes U_n\otimes U_0\otimes 
  U_1\otimes\dots\otimes U_n,\\
  \label{eq:groupCode2}
  &\id_0\otimes\pi\otimes\id_0\otimes\pi.
\end{align}
for all $U_0\in SU(K)$, $U_i\in SU(d)$, and $\pi\in S_n$. Thus, the symmetrized 
$\Phi_{AB}$ is of the form
\begin{equation}
  \begin{aligned}
    \Phi_{AB}=&\I_{K^2}\otimes\sum_{i=0}^nx_i\cP\{V^{\otimes 
    i}\otimes\I^{\otimes (n-i)}\}\\
    &+V_{A_0B_0}\otimes\sum_{i=0}^ny_i\cP\{V^{\otimes i}\otimes\I^{\otimes 
    (n-i)}\},
  \end{aligned}
  \label{eq:PhiSymmCode}
\end{equation}
for $x_i,y_i\in\dR$. Hence, all the techniques we developed for AME states can be 
easily adapted to the quantum error correcting codes. For example, the PPT 
relaxation can be written as a linear program and the symmetric extension can 
be written as SDPs.

%%%%%%%%%%%%%%%%%%%%%%%%%%%%%%%%%%%%%%%%%%%%%%%%%%%%%%%%%%%%%%%%%%%%%%%%%%%%%
% \vspace{2em}
% \noindent\textbf{\large Data Availability}\\
% Data sharing not applicable to this article as no datasets were
% generated or analyzed during the current study.
%%%%%%%%%%%%%%%%%%%%%%%%%%%%%%%%%%%%%%%%%%%%%%%%%%%%%%%%%%%%%%%%%%%%%%%%%%%%%

%%%%%%%%%%%%%%%%%%%%%%%%%%%%%%%%%%%%%%%%%%%%%%%%%%%%%%%%%%%%%%%%%%%%%%%%%%%%%
% \vspace{2em}
% \noindent\textbf{\large Code Availability}\\
% The codes used for this study are available from the corresponding author
% upon reasonable request.
%%%%%%%%%%%%%%%%%%%%%%%%%%%%%%%%%%%%%%%%%%%%%%%%%%%%%%%%%%%%%%%%%%%%%%%%%%%%%

%%%%%%%%%%%%%%%%%%%%%%%%%%%%%%%%%%%%%%%%%%%%%%%%%%%%%%%%%%%%%%%%%%%%%%%%%%%%%
% \vspace{2em}
% \noindent\textbf{\large Acknowledgements}\\
%%%%%%%%%%%%%%%%%%%%%%%%%%%%%%%%%%%%%%%%%%%%%%%%%%%%%%%%%%%%%%%%%%%%%%%%%%%%%
\begin{acknowledgments}
  We would like to thank Felix Huber and G\'eza T\'oth for discussions.  This 
  work was supported by the Deutsche Forschungsgemeinschaft (DFG, German 
  Research Foundation - 447948357), the ERC (Consolidator Grant 683107/TempoQ), 
  and the House of Young Talents Siegen.  N.W. acknowledges support by the 
  QuantERA grant QuICHE and the German ministry of education and research (BMBF 
  grant no. 16KIS1119K).
\end{acknowledgments}

%%%%%%%%%%%%%%%%%%%%%%%%%%%%%%%%%%%%%%%%%%%%%%%%%%%%%%%%%%%%%%%%%%%%%%%%%%%%%
% \vspace{1ex}
% \noindent\textbf{\large Author contributions}\\
%%%%%%%%%%%%%%%%%%%%%%%%%%%%%%%%%%%%%%%%%%%%%%%%%%%%%%%%%%%%%%%%%%%%%%%%%%%%%
% X.-D.Y., T.S., N.W., H.C.N., and O.G. participated in deriving the results and 
% writing the manuscript. O.G. supervised the project.

%%%%%%%%%%%%%%%%%%%%%%%%%%%%%%%%%%%%%%%%%%%%%%%%%%%%%%%%%%%%%%%%%%%%%%%%%%%%%
% \vspace{1ex}
% \noindent\textbf{\large Competing Interests}\\
%%%%%%%%%%%%%%%%%%%%%%%%%%%%%%%%%%%%%%%%%%%%%%%%%%%%%%%%%%%%%%%%%%%%%%%%%%%%%
% The authors declare no competing interests.

\onecolumngrid

%%%%%%%%%%%%%%%%%%%%%%%%%%%%%%%%%%%%%%%%%%%%%%%%%%%%%%%%%%%%%%%%%%%%%%%%%%%%%
\appendix
% \newpage
% \setcounter{page}{0}
\newcommand{\bu}[1]{\noindent$\bullet$~{\bf \boldmath #1}}
%%%%%%%%%%%%%%%%%%%%%%%%%%%%%%%%%%%%%%%%%%%%%%%%%%%%%%%%%%%%%%%%%%%%%%%%%%%%%

%%%%%%%%%%%%%%%%%%%%%%%%%%%%%%%%%%%%%%%%%%%%%%%%%%%%%%%%%%%%%%%%%%%%%%%%%%%%%
\section{Existence and uniqueness of the symmetrized $\Phi_{AB}$ for AME 
states}
\label{app:uniqueness}
%%%%%%%%%%%%%%%%%%%%%%%%%%%%%%%%%%%%%%%%%%%%%%%%%%%%%%%%%%%%%%%%%%%%%%%%%%%%%

Before proving the existence and uniqueness of the symmetrized $\Phi_{AB}$, we 
show how to simplify the constraints in 
\MEqs~(\ref{eq:optAME2},\,\ref{eq:optAME3}) by taking advantage of 
\MEq~\eqref{eq:PhiSymm}. The meaning of this simplification is two-fold: first, 
it gives an intuition about why the symmetrized $\Phi_{AB}$ is uniquely 
determined; second, it can be directly generalized to other marginal problems, 
such as the $m$-uniform states and quantum codes, in which the symmetrized 
$\Phi_{AB}$ are no longer uniquely determined. Recall the symmetrized 
$\Phi_{AB}$ is of the form
\begin{equation}
  \Phi_{AB}=\sum_{i=0}^nx_i\cP\{V^{\otimes i}\otimes\I^{\otimes (n-i)}\},
  \label{eq:PhiSymmA}
\end{equation}
then the constraints in \MEqs~(\ref{eq:optAME2},\,\ref{eq:optAME3}) can be 
simplified as follows:\\[1ex]
\bu{Normalization constraint $\Tr(\Phi_{AB})=1$:}
\begin{equation}
     \Tr(\Phi_{AB})
     =\Tr\left[\sum_{i=0}^nx_i\cP\{V^{\otimes i}\otimes
     \I^{\otimes (n-i)}\}\right]
     =\sum_{i=0}^n\binom{n}{i}d^{2n-i}x_i=1.
  \label{eq:consNormalizationSim}
\end{equation}

\bu{Symmetric subspace constraint $V_{AB}\Phi_{AB}=\Phi_{AB}$:}
\begin{equation}
  V_{AB}\Phi_{AB}=V^{\otimes n}\Phi_{AB}
  =\sum_{i=0}^nx_i\cP\{V^{\otimes (n-i)}\otimes\I^{\otimes i}\}
  =\sum_{i=0}^nx_i\cP\{V^{\otimes i}\otimes\I^{\otimes (n-i)}\},
  \label{eq:consSymmetric}
\end{equation}
which implies that
\begin{equation}
  x_i=x_{n-i} \FA i=0,1,\dots,n-r-1,
  \label{eq:consSymmetricSim}
\end{equation}
where $r=\lfloor n/2\rfloor$.

\bu{Marginal constraints 
$\Tr_{A_\bI}(\Phi_{AB})=\frac{\I_{d^r}}{d^r}\otimes\Tr_A(\Phi_{AB})$:}\\[1ex]
Because $\Phi_{AB}$ is invariant under permutations $\Pi\in S_n$, it is 
sufficient to consider $\bI=\{1,2,\dots,n-r\}$. Further, as 
$\frac{\I_{d^r}}{d^r}\otimes\Tr_A(\Phi_{AB}) \propto\I_{d^{n+r}}$, it must also 
hold
that $\Tr_{A_\bI}(\Phi_{AB}) \propto\I_{d^{n+r}}$. Hence, all terms that 
contain $V$ in $\Tr_{A_\bI}(\Phi_{AB})$ must be zero. Thus, the marginal 
constraints 
$\Tr_{A_\bI}(\Phi_{AB})=\frac{\I_{d^r}}{d^r}\otimes\Tr_A(\Phi_{AB})$ are 
equivalent to
\begin{equation}
  \sum_{i=0}^{n-r}\binom{n-r}{i}d^{n-r-i}x_{s+i}=0 \FA s=1,2,\dots,r.
  \label{eq:consPartialTraceSim}
\end{equation}
\AEquations~(\ref{eq:consNormalizationSim},\,\ref{eq:consSymmetricSim},\,%
\ref{eq:consPartialTraceSim}) provide $n+1$ linear equations, which can 
uniquely determine the $n+1$ parameters $(x_0,x_1,\dots,x_n)$ in $\Phi_{AB}$.

To rigorously prove the existence and uniqueness of $\Phi_{AB}$ constrained by 
\AEqs~(\ref{eq:consNormalizationSim},\,\ref{eq:consSymmetricSim},\,%
\ref{eq:consPartialTraceSim}), we take advantage of the following lemma; for 
more details about the dual basis, see e.g., Ref.~\cite{Lebedev.etal2010}.
\begin{lemma}
  Let $\{\ket{x_i}\}_i$ be a basis for a finite-dimensional Hilbert space, 
  which is not required to be orthogonal or normalized. Then, there exists 
  a unique vector $\ket{y}$ satisfying the linear equations 
  $\{\braket{x_i}{y}=y_i\}_i$ for any $\{y_i\}_i$. Concretely, let 
  $\{\ket{\tilde{x}_i}\}_i$ be the dual basis for $\{\ket{x_i}\}_i$, i.e., 
  $\braket{x_i}{\tilde{x}_j}=\delta_{ij}$, then 
  $\ket{y}=\sum_iy_i\ket{\tilde{x}_i}$.
  \label{lem:basis}
\end{lemma}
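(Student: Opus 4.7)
The plan is to split the claim into (i) the existence of the dual basis $\{\ket{\tilde{x}_i}\}_i$ and (ii) the existence-and-uniqueness of $\ket{y}$, with uniqueness coming from a standard non-degeneracy argument for a basis.

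First I would construct the dual basis explicitly using the Gram matrix. Let $G$ be the matrix with entries $G_{ij}=\braket{x_i}{x_j}$. Since $\{\ket{x_i}\}_i$ is a basis, $G$ is the matrix of the inner product in this basis and is therefore Hermitian and positive-definite, hence invertible. I would then define
\begin{equation}
  \ket{\tilde{x}_j}=\sum_k (G^{-1})_{kj}\ket{x_k}
\end{equation}
and verify directly that $\braket{x_i}{\tilde{x}_j}=\sum_k (G^{-1})_{kj}G_{ik}=\delta_{ij}$. This shows existence of the dual basis. Since a left inverse in a finite-dimensional space is also a right inverse, the $\{\ket{\tilde{x}_j}\}_j$ are themselves a basis (any linear dependence tested against $\ket{x_i}$ would force all coefficients to vanish).

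Next, for the existence of $\ket{y}$ I would simply verify the proposed formula: if $\ket{y}=\sum_j y_j\ket{\tilde{x}_j}$, then by linearity $\braket{x_i}{y}=\sum_j y_j\braket{x_i}{\tilde{x}_j}=\sum_j y_j\delta_{ij}=y_i$, as required. For uniqueness, suppose $\ket{y}$ and $\ket{y'}$ both satisfy the system, and set $\ket{z}=\ket{y}-\ket{y'}$. Then $\braket{x_i}{z}=0$ for every $i$; expanding $\ket{z}=\sum_j z_j\ket{\tilde{x}_j}$ in the dual basis immediately yields $z_i=\braket{x_i}{z}=0$ for all $i$, so $\ket{z}=0$.

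There is essentially no obstacle here beyond bookkeeping: the only subtlety to keep an eye on is the antilinearity convention in $\braket{\cdot}{\cdot}$, which is why the Gram matrix enters with $G_{ij}=\braket{x_i}{x_j}$ and one must place $G^{-1}$ with the correct index ordering so that $\braket{x_i}{\tilde{x}_j}=\delta_{ij}$ holds and not $\delta_{ji}$. Once this is fixed, the proof is a short computation and the lemma follows; it then provides the abstract justification that the $n+1$ linear equations in \AEqs~(\ref{eq:consNormalizationSim},\,\ref{eq:consSymmetricSim},\,\ref{eq:consPartialTraceSim}) uniquely determine the coefficients $(x_0,\dots,x_n)$ of $\Phi_{AB}$.
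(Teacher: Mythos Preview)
Your argument is correct: the Gram-matrix construction of the dual basis is standard and your verification of existence and uniqueness is clean. Note, however, that the paper does not actually supply a proof of this lemma; it merely states the result and refers the reader to the literature (Ref.~\cite{Lebedev.etal2010}) for details on dual bases, so there is no in-paper proof to compare against. Your write-up thus fills in what the paper leaves as background, and the Gram-matrix route you chose is exactly the conventional one.
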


First, we define $\mathcal{S}$ to be the space generated by the linearly 
independent operators
\begin{equation}
  X_i=\cP\{V^{\otimes i}\otimes\I^{\otimes (n-i)}\}
  \FA i=0,1,\dots,n,
  \label{eq:Ai}
\end{equation}
and the inner product to be the Hilbert-Schmidt inner product, e.g.,
\begin{equation}
  \mean{X_i,X_j}=\Tr(X_i^\dagger X_j)=\Tr(X_iX_j).
  \label{eq:inner}
\end{equation}
Then, $\Phi_{AB}\in\mathcal{S}$ by \AEq~\eqref{eq:PhiSymmA}.

Second, we show that if $\Phi_{AB}$ exists, then it is unique. By slightly 
modifying the derivation of \AEq~\eqref{eq:consPartialTraceSim}, it is easy to see
that the normalization constraint and the marginal constraints for $\AME(n,d)$ 
are equivalent to
\begin{equation}
  \Tr_{A_\bI B_\bI}(\Phi_{AB})=\frac{\I_{d^r}}{d^r}\otimes\frac{\I_{d^r}}{d^r}
  \FA I\in\mathcal{I}_r,
  \label{eq:condDT}
\end{equation}
which implies that
\begin{equation}
  \Tr(X_i\Phi_{AB})=\binom{n}{i}\Tr\left[V^{\otimes i}
  \frac{\I_{d^i}}{d^i}\otimes\frac{\I_{d^i}}{d^i}\right]
  =\frac{\binom{n}{i}}{d^i}
  \FA i=0,1,\dots,r.
  \label{eq:marginals}
\end{equation}
The symmetric subspace constraint $V_{AB}\Phi_{AB}=V^{\otimes 
n}\Phi_{AB}=\Phi_{AB}$ and the relation $X_iV_{AB}=X_iV^{\otimes n}=X_{n-i}$ 
imply that
\begin{equation}
  \Tr(X_i\Phi_{AB})=\Tr(X_iV_{AB}\Phi_{AB})=\Tr(X_{n-i}\Phi_{AB})
  \FA i=0,1,\dots,n.
  \label{eq:aian}
\end{equation}
Thus, we get
\begin{equation}
  \mean{X_i,\Phi_{AB}}=\Tr(X_i\Phi_{AB})
  =\frac{\binom{n}{i}}{\min\{d^i,d^{n-i}\}}
  \FA i=0,1,\dots,n.
  \label{eq:ai}
\end{equation}
which implies the uniqueness by Lemma~\ref{lem:basis}.
Furthermore, in this case, we can easily write down the dual basis 
$\{\widetilde{X}_i\}_{i=0}^n$ for $\{X_i\}_{i=0}^n$,
\begin{equation}
  \widetilde{X}_i=\frac{1}{\binom{n}{i}(d^2-1)^n}
  \cP\left\{(\I-\frac{1}{d}V)^{\otimes i}\otimes(V-\frac{1}{d}\I)^{\otimes 
  (n-i)}\right\}
  \FA i=0,1,\dots,n.
  \label{eq:AiDual}
\end{equation}
It is straightforward to check that $\Tr(\widetilde{X}_iX_j)=\delta_{ij}$. Thus, 
we can get an explicit form of $\Phi_{AB}$ from
$x_i=\Tr(\widetilde{X}_i\Phi_{AB})$,
\begin{equation}
  \begin{aligned}
    x_i=&\frac{1}{(d^2-1)^n}\Tr\left[(\I-\frac{1}{d}V)^{\otimes i}
    \otimes(V-\frac{1}{d}\I)^{\otimes (n-i)}\Phi_{AB}\right]\\
    =&\frac{1}{(d^2-1)^n}\sum_{l=0}^n\sum_{k=0}^l
    \frac{(-1)^{i+l}}{d^{i+l-2k}}\binom{i}{k}\binom{n-i}{l-k}
    \Tr\left[V^{\otimes (n-l)}\otimes\I^{\otimes l}\Phi_{AB}\right]\\
    =&\frac{(-1)^i}{(d^2-1)^n}
    \sum_{l=0}^n\sum_{k=0}^l\frac{(-1)^l\binom{i}{k}\binom{n-i}{l-k}}
    {\min\{d^{i+2l-2k}, d^{n+i-2k}\}},
  \end{aligned}
  \label{eq:xiApp}
\end{equation}
where we have used the relation
\begin{equation}
  \Tr\left[V^{\otimes (n-l)}\otimes\I^{\otimes l}
  \Phi_{AB}\right]=\frac{1}{\min\{d^l,d^{n-l}\}},
\end{equation}
whose proof is similar to \AEq~\eqref{eq:ai}.

Finally, we show the existence of $\Phi_{AB}$, i.e.,
$\Phi_{AB}$ determined by \AEq~\eqref{eq:ai} is compatible with the constraints 
in \AEqs~(\ref{eq:consNormalizationSim},\,\ref{eq:consSymmetricSim},\,%
\ref{eq:consPartialTraceSim}). To this end, we show that \AEq~\eqref{eq:ai} 
implies that $V_{AB}\Phi_{AB}=\Phi_{AB}$ and \AEq~\eqref{eq:condDT}. As 
$\Tr(X_i\Phi_{AB})=\Tr(X_{n-i} \Phi_{AB})$ by \AEq~\eqref{eq:ai} and 
$X_iV_{AB}=X_iV^{\otimes n}=X_{n-i}$, it holds that
\begin{equation}
  \Tr(X_i\Phi_{AB})=\Tr(X_iV_{AB}\Phi_{AB})
  \FA i=0,1,\dots,n.
  \label{eq:aisymm}
\end{equation}
From the uniqueness statement in Lemma~\ref{lem:basis}, it follows that 
$V_{AB}\Phi_{AB}=\Phi_{AB}$. To prove \AEq~\eqref{eq:condDT}, we define 
$\mathcal{R}$ to be the space generated by the linearly independent operators
\begin{equation}
  R_i=\cP\{V^{\otimes i}\otimes\I^{\otimes (r-i)}\}
  \FA i=0,1,\dots,r.
  \label{eq:Bi}
\end{equation}
\AEquation~\eqref{eq:ai} and the permutation symmetry of 
$\Phi_{AB}\in\mathcal{S}$ imply that
\begin{equation}
  \Tr\left[V^{\otimes i}\otimes\I^{\otimes(n-i)}\Phi_{AB}\right]=\frac{1}{d^i}
  \FA i=0,1,\dots,r.
  \label{eq:permus}
\end{equation}
Thus,
\begin{equation}
  \Tr[R_i\Tr_{A_\bI B_\bI}(\Phi_{AB})]=\binom{r}{i}\Tr\left[V^{\otimes 
  i}\otimes\I^{\otimes(n-i)}\Phi_{AB}\right]
  =\frac{\binom{r}{i}}{d^i},
  \FA i=0,1,\dots,r
  \FA I\in\mathcal{I}_r,
  \label{eq:condDTB}
\end{equation}
Furthermore, one can easily check that
\begin{equation}
  \Tr\left[R_i\frac{\I_{d^r}}{d^r}\otimes\frac{\I_{d^r}}{d^r}\right]
  =\frac{\binom{r}{i}}{d^i}
  \FA i=0,1,\dots,r.
  \label{eq:condDTId}
\end{equation}
Then, applying the uniqueness statement in Lemma~\ref{lem:basis} to 
$\mathcal{R}$ implies \AEq~\eqref{eq:condDT}. Hence, we proved the compatibility 
of $\Phi_{AB}$ with 
\AEqs~(\ref{eq:consNormalizationSim},\,\ref{eq:consSymmetricSim},\,%
\ref{eq:consPartialTraceSim}).

%%%%%%%%%%%%%%%%%%%%%%%%%%%%%%%%%%%%%%%%%%%%%%%%%%%%%%%%%%%%%%%%%%%%%%%%%%%%%
\section{Positivity and PPT conditions for AME state}
\label{app:positivity}
%%%%%%%%%%%%%%%%%%%%%%%%%%%%%%%%%%%%%%%%%%%%%%%%%%%%%%%%%%%%%%%%%%%%%%%%%%%%%

To get a closed form of the positivity and PPT conditions for AME states, we 
will use the following relations
\begin{gather}
  \Tr\left(V^{\otimes l}\otimes\I^{\otimes 
  (n-l)}\Phi_{AB}\right)=\frac{1}{\min\{d^l,d^{n-l}\}},\\
  \Tr\left(\ket{\phi^+}\bra{\phi^+}^{\otimes l}\otimes\I^{\otimes 
  (n-l)}\Phi_{AB}^{T_B}\right)=\frac{1}{\min\{d^{2l},d^n\}},
\end{gather}
where the proof of the first relation is similar to 
\AEqs~(\ref{eq:ai},\,\ref{eq:permus}) and the second relation follows from the 
observation that $\Tr(W\Phi_{AB}^{T_B})=\Tr(W^{T_B}\Phi_{AB})$. From 
\MEq~\eqref{eq:PhiSymAsy} it follows that
the positivity condition 
is equivalent to $\Tr(P_+^{\otimes (n-i)}\otimes P_-^{\otimes 
i}\Phi_{AB})\ge0$. This gives
\begin{equation}
  \begin{aligned}
    &\Tr\left[(\I+V)^{\otimes (n-i)}\otimes(\I-V)^{\otimes i}\Phi_{AB}\right]\\
    =&\Tr\left[\sum_{l=0}^n\sum_{k=0}^l(-1)^k\binom{i}{k}\binom{n-i}{l-k}
    V^{\otimes l}\otimes\I^{\otimes(n-l)}\Phi_{AB}\right]\\
    =&\sum_{l=0}^n\sum_{k=0}^l\frac{(-1)^k\binom{i}{k}\binom{n-i}{l-k}}
    {\min\{d^l,d^{n-l}\}}\ge 0 \FA i=0,1,\dots,n.
  \end{aligned}
  \label{eq:positivityFinal}
\end{equation}
Similarly due to \MEq~\eqref{eq:PhiTSymAsy}, the PPT condition is equivalent to
\begin{equation}
  \begin{aligned}
    &\Tr\left[\ket{\phi^+}\bra{\phi^+}^{\otimes(n-i)}\otimes
      (\I-\ket{\phi^+}\bra{\phi^+})^{\otimes i}\Phi_{AB}^{T_B}\right]\\
    =&\Tr\left[\sum_{k=0}^i(-1)^k\binom{i}{k}\ket{\phi^+}
      \bra{\phi^+}^{\otimes(n+k-i)}
    \otimes\I^{\otimes(i-k)}\Phi_{AB}^{T_B}\right]\\
    =&\sum_{k=0}^i\frac{(-1)^k\binom{i}{k}}
    {\min\{d^{2(n+k-i)},d^n\}}\ge 0
    \FA i=0,1,\dots,n.
  \end{aligned}
  \label{eq:PPTFinal}
\end{equation}
By noticing that
\begin{align}
  \label{eq:trace1}
  &\Tr[(\I+V)^{\otimes (n-i)}\otimes(\I-V)^{\otimes i}]
  =d^n(d+1)^{n-i}(d-1)^i\\
  \label{eq:trace2}
  &\Tr[\ket{\phi^+}\bra{\phi^+}^{\otimes(n-i)}\otimes
  (\I-\ket{\phi^+}\bra{\phi^+})^{\otimes i}]
  =(d^2-1)^i,
\end{align}
we obtain an explicit expressions for $p_i$ and $q_i$
\begin{align}
  \label{eq:qp1}
  p_i&=\frac{1}{d^n(d+1)^{n-i}(d-1)^i}
  \sum_{l=0}^n\sum_{k=0}^l\frac{(-1)^k\binom{i}{k}\binom{n-i}{l-k}}
  {\min\{d^l,d^{n-l}\}},\\
  \label{eq:qp2}
  q_i&=\frac{1}{(d^2-1)^i}
  \sum_{k=0}^i\frac{(-1)^k\binom{i}{k}}
  {\min\{d^{2(n+k-i)},d^n\}}.
\end{align}
For example, for the existence of the $4$-qubit AME state, the eigenvalues of 
the matrix
$\Phi_{AB}$ are
\begin{equation}
  (p_0,p_1,p_2,p_3,p_4)=\left(\frac{5}{864},0,\frac{1}{96},0,-\frac{1}{32}\right).
\end{equation}
The last negative eigenvalue implies that no $\AME(4,2)$ state exists.

%%%%%%%%%%%%%%%%%%%%%%%%%%%%%%%%%%%%%%%%%%%%%%%%%%%%%%%%%%%%%%%%%%%%%%%%%%%%%
\section{Multi-party extension: primal problem}
\label{app:SEGen}
%%%%%%%%%%%%%%%%%%%%%%%%%%%%%%%%%%%%%%%%%%%%%%%%%%%%%%%%%%%%%%%%%%%%%%%%%%%%%

\begin{figure}
  \centering
  \includegraphics[width=.65\textwidth]{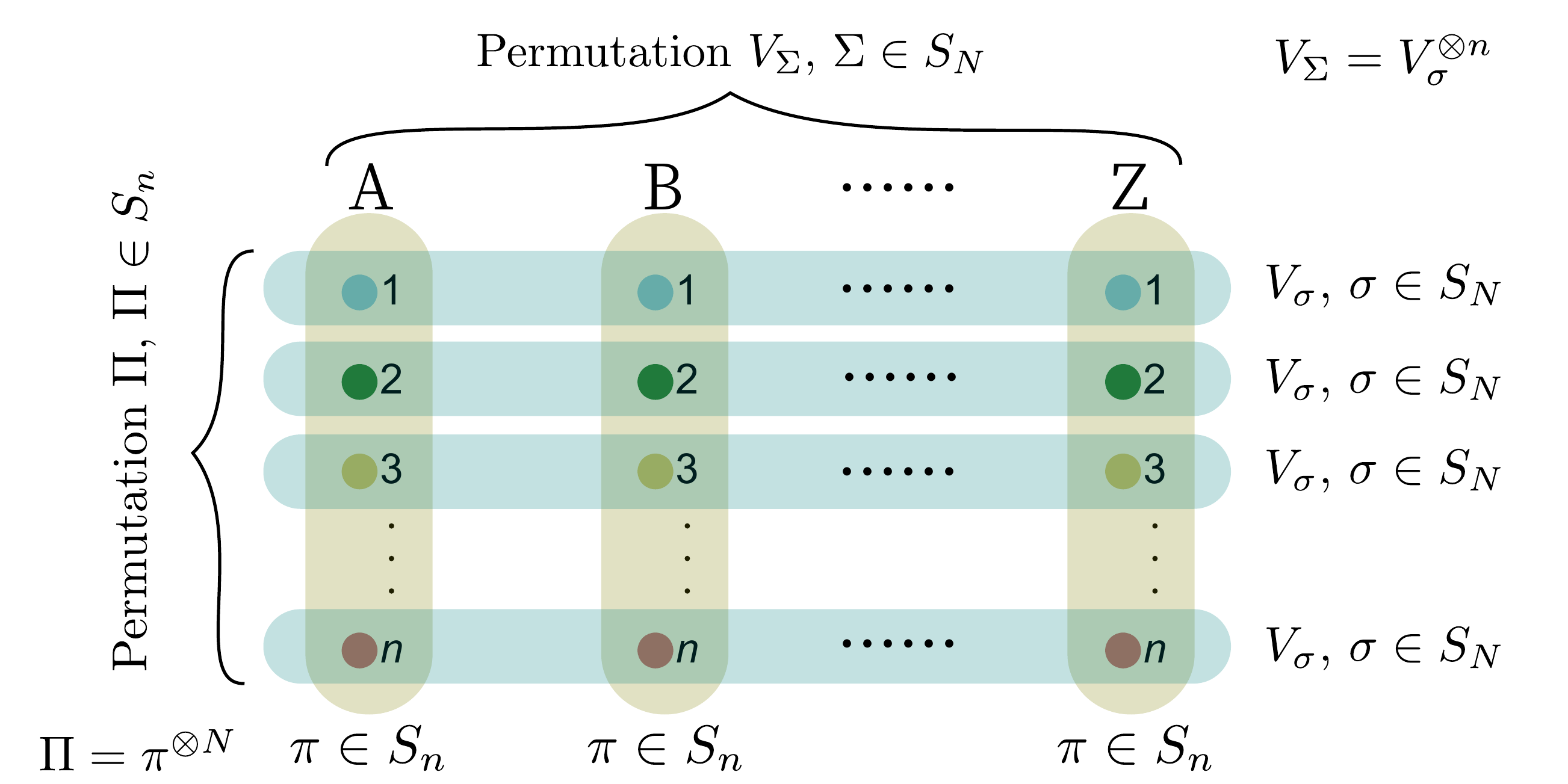}
  \caption{If the marginal problem has a solution $\ket{\varphi}$, then
  there are multi-party extensions $\Phi_{AB\cdots Z}$ for any number of 
copies, obeying some semidefinite constraints.}
  \label{fig:NcopyExt}
\end{figure}

We are going to analyze and simplify the hierarchy of SDPs stated in 
Theorem~\ref{thm:marginalPureHierarchy} for the case of the existence of AME 
states,
\begin{equation}
  \begin{aligned}
    &\findover\quad && \Phi_{AB\cdots Z}\\
    &\subto && P_N^+\Phi_{AB\cdots Z}P_N^+
    =\Phi_{AB\cdots Z},\\
    &       && \Phi_{AB\cdots Z}\ge 0,
    ~\Tr(\Phi_{AB\cdots Z})=1,\\
    &       && \Tr_{A_\bI}(\Phi_{AB\cdots Z})=\frac{\I_{d^r}}{d^r}\otimes
    \Tr(\Phi_{B\cdots Z})\FA I\in\mathcal{I}_r.\\
    %&       && g\Phi_{AB\cdots Z}g^\dagger
    %=\Phi_{AB\cdots Z}\FA g\in G.
  \end{aligned}
  \label{eq:marginalFeasibilitySN}
\end{equation}
Similar to the two-party case, we can view the $N$-party state $\Phi_{AB\cdots 
Z}$ as $\Phi_{12\dots n}$, where $i$ labels the subsystems $A_iB_i\cdots Z_i$.  
The permutations on $A_iB_i\cdots Z_i$ are denoted with subscripts $ab\cdots 
z$.  For example, $V_{AB}$ and $V_{ABC}$ can be written as $V^{\otimes n}_{ab}$ 
and $V^{\otimes n}_{abc}$, respectively, where $V_{ab}$ are the permutations 
$A_i\leftrightarrow B_i$ and $V_{abc}$ are the permutations $A_i\rightarrow 
B_i\rightarrow C_i\rightarrow A_i$. Generally, we use $\sigma$ and $\Sigma$ to 
denote the permutations on $ab\cdots z$ and $AB\cdots Z$, respectively, and in 
addition $V_\Sigma=V_\sigma^{\otimes n}$.

Again, as the set of $\AME(n,d)$ is invariant under local unitaries and 
permutations on the $n$ particles, we can assume that $\Phi_{AB\cdots Z}$ is 
symmetric under the following operations,
\begin{align}
  \label{eq:groupExt1}
  &[U_1\otimes\dots\otimes U_n]^{\otimes N}\FA U_i\in SU(d),\\
  \label{eq:groupExt2}
  &\pi^{\otimes N}\FA \pi\in S_n.
\end{align}
Note that $\pi\in S_n$ denotes a permutation on $12\cdots n$ (vertical 
permutation in Fig.~\ref{fig:NcopyExt}), while $\sigma\in S_N$ in the previous 
paragraph  denotes a permutation on $ab\cdots z$ (horizontal permutation in 
Fig.~\ref{fig:NcopyExt}).  According to Schur-Weyl 
duality~\cite{Fulton.Harris1991}, any operator $\Phi$ such that $[\Phi, 
U^{\otimes N}]=0$ must have the form
\begin{equation}
  \Phi=\sum_{\sigma}x_\sigma V_\sigma.
  \label{eq:ShurWeyl}
\end{equation}
Thus, the $[U_1\otimes\dots\otimes U_n]^{\otimes N}$ symmetry implies that
\begin{equation}
    \Phi_{AB\cdots Z}
    =\sum_{\sigma_1\sigma_2\cdots\sigma_n}x_{\sigma_1\sigma_2\cdots\sigma_n}
    V_{\sigma_1}\otimes V_{\sigma_2}\otimes\cdots\otimes V_{\sigma_n}.
  \label{eq:WernerGeneral}
\end{equation}
The number of parameters can be further reduced by taking advantage of the 
vertical permutation symmetry $\{\Pi=\pi^{\otimes N}\mid\pi\in S_n\}$, i.e.,
\begin{equation}
  x_{\sigma_1\sigma_2\cdots\sigma_n}=x_{\sigma'_1\sigma'_2\cdots\sigma'_n}
  \label{eq:coSn}
\end{equation}
when $\{\sigma_1,\sigma_2,\cdots,\sigma_n\}$ and 
$\{\sigma'_1,\sigma'_2,\cdots,\sigma'_n\}$ are the same multiset (set that 
allows repeated elements).

We are now ready to express the constraints in 
\AEq~\eqref{eq:marginalFeasibilitySN} in terms of the variables 
$x_{\sigma_1\sigma_2\cdots\sigma_n}$ in \AEq~\eqref{eq:WernerGeneral}.
Naively plugging \AEq~\eqref{eq:WernerGeneral} into 
\AEq~\eqref{eq:marginalFeasibilitySN} results in relations between large 
matrices; however the symmetry of the problem allows one to also simplify these 
constraints.

Notice that the partial trace operation can also be expressed under the basis 
$\{V_\sigma\mid\sigma\in S_N\}$. For example,
\begin{equation}
  \begin{aligned}
    \Tr_c(\I)\otimes\I_c&=d\I,&
    \Tr_c(V_{ab})\otimes\I_c&=dV_{ab},&
    \Tr_c(V_{ac})\otimes\I_c&=\I,\\
    \Tr_c(V_{bc})\otimes\I_c&=\I,&
    \Tr_c(V_{abc})\otimes\I_c&=V_{ab},&
    \Tr_c(V_{cba})\otimes\I_c&=V_{ab},
  \end{aligned}
  \label{eq:ptMatrix}
\end{equation}
where all $V_\sigma$ are operators on $abc$ and we perform $\otimes\I_c$ to 
ensure that the operator stays within the original space.  Similarly, we can 
implement the trace operation. In this way, the equality constraints regarding 
the marginals in \AEq~\eqref{eq:marginalFeasibilitySN} can be written in terms 
of the basis operators $V_{\sigma_1}\otimes V_{\sigma_2}\otimes\cdots\otimes 
V_{\sigma_n}$ without referring to explicit matrix elements. Also, the 
symmetric projection $P_N^+$ takes the form
\begin{equation}
  P_N^+=\frac{1}{N!}\sum_{\sigma\in S_N}V_\sigma^{\otimes n}.
  \label{eq:symmetricSubspace}
\end{equation}
Therefore the equality $P_N^+\Phi_{AB\cdots Z}P_N^+
=\Phi_{AB\cdots Z}$ can also be expressed in terms of basis operators 
$V_{\sigma_1}\otimes V_{\sigma_2}\otimes\cdots\otimes V_{\sigma_n}$.

Let us now consider the positivity constraint $\Phi_{AB\cdots Z}\ge 0$.  Here, 
the crucial observation is that $\Phi_{AB\cdots Z}$ is simply a linear 
combination of the basic matrices $V_{\sigma_1}\otimes 
V_{\sigma_2}\otimes\cdots\otimes V_{\sigma_n}$. The matrices $V_{\sigma_i}$ in 
fact form a so-called (unitary linear) representation of the  group 
$S_N$~\cite{Fulton.Harris1991}. By the general theory of linear representations 
of groups, there is an orthogonal basis such that all of these matrices are 
block-diagonalized.  Moreover, the possible blocks that appear in the 
block-diagonal form of these matrices are also completely specified by the 
group, known as the unitary irreducible representations of the group. In this 
way, the positivity constraint on $\Phi_{AB\cdots Z}\ge 0$ is reduced to the 
positivity of each of the different irreducible blocks. 

For the symmetric group $S_N$, the irreducible representations are conveniently 
labeled by the partitions of $N$. A partition $\lambda$ of length $k= 
\abs{\lambda}$ is a tuple of positive integer numbers $\lambda 
= (N_1,N_2,\ldots,N_{k})$ such that $N_1 \ge N_2 \ge \cdots \ge N_{k}$ and $N_1 
+ N_2 + \cdots + N_{k} = N$.
We denote the set of all partitions by $\Lambda_N$.  For each partition 
$\lambda$, there is an associated unitary irreducible representation $M_\lambda$, 
that is, the set of unitary matrices $M_\lambda(\sigma)$ for $\sigma \in S_N$.  
Concretely, by choosing a suitable orthonormal basis (independent of $\sigma$), 
all $V_\sigma$ can be written as
\begin{equation}
  V_\sigma=\bigoplus_{\lambda}M_\lambda(\sigma)\otimes\I_{d_\lambda}
  \label{eq:SNdecom}
\end{equation}
where $M_\lambda(\sigma)$ correspond to the unitary irreducible representations 
and $d_\lambda$ are the corresponding multiplicities.
The matrix elements of $M_\lambda(\sigma)$ can also be constructed explicitly 
by taking advantage of the Young tableaux~\cite{Boerner1963}. For practical 
purposes, these matrices can be called from an appropriate computer algebra 
system such as GAP~\cite{GAP4}.  For the representation $V_{\sigma}$, it is 
also known that $M_\lambda(\sigma)$ is present ($d_\lambda\ne 0$) in the 
block-diagonal form of $V_{\sigma}$ if and only if the length of $\lambda$ is 
smaller than the local dimension $\abs{\lambda} \le 
d$~\cite{Fulton.Harris1991}. We thus have the following observation.

\begin{observation}
  For $\Phi_{AB\cdots Z}$  in \AEq~\eqref{eq:WernerGeneral}, $\Phi_{AB\cdots Z} 
  \ge 0$ if and only if
  \begin{equation}
    \sum_{\sigma_1\sigma_2\cdots\sigma_n}x_{\sigma_1\sigma_2\cdots\sigma_n}
    M_{\lambda_1}(\sigma_1)\otimes
    M_{\lambda_2}(\sigma_2)\otimes\cdots\otimes
    M_{\lambda_n}(\sigma_n)\ge 0,
    \label{eq:irrepPositivity}
  \end{equation}
  for all $(\lambda_1,\lambda_2,\dots,\lambda_n)\in\Lambda_N^n$ such that 
  $\abs{\lambda_i} \le d$. In addition, as the state $\Phi_{AB\cdots Z}$ is 
  also permutation-invariant under $\Pi\in S_n$, we can restrict to the cases 
  where $\lambda_1\ge\lambda_2\ge\dots\ge\lambda_n$ with any predefined order 
  for the partitions.
  \label{obs:irrepPositivity}
\end{observation}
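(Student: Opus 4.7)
The plan is to decompose $\Phi_{AB\cdots Z}$ simultaneously under the action of the symmetric group on each of the $n$ slots, so that the positivity constraint reduces to a direct sum of constraints indexed by tuples of partitions.

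The key ingredient is the representation theory of $S_N$ on $(\dC^d)^{\otimes N}$, which has already been stated as \AEq~\eqref{eq:SNdecom}: there is an orthonormal basis in which $V_\sigma=\bigoplus_{\lambda\in\Lambda_N} M_\lambda(\sigma)\otimes \I_{d_\lambda}$, with $d_\lambda\ne 0$ precisely when $\abs{\lambda}\le d$. The first step is to apply this decomposition independently in each of the $n$ factors of \AEq~\eqref{eq:WernerGeneral}. Tensoring the decompositions and performing the standard reshuffling that groups the irreducible factors to the left and the multiplicity factors to the right, the basis operator $V_{\sigma_1}\otimes\cdots\otimes V_{\sigma_n}$ becomes
\begin{equation}
\bigoplus_{\lambda_1,\dots,\lambda_n} \bigl[M_{\lambda_1}(\sigma_1)\otimes\cdots\otimes M_{\lambda_n}(\sigma_n)\bigr]\otimes \I_{d_{\lambda_1}\cdots d_{\lambda_n}},
\end{equation}
where only those tuples with $\abs{\lambda_i}\le d$ contribute.

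Second, since $\Phi_{AB\cdots Z}$ is by \AEq~\eqref{eq:WernerGeneral} a real linear combination of these basis operators, taking the same linear combination inside every block yields the block-diagonal form
\begin{equation}
\Phi_{AB\cdots Z}=\bigoplus_{(\lambda_1,\dots,\lambda_n)} B_{\lambda_1\cdots\lambda_n}\otimes \I_{d_{\lambda_1}\cdots d_{\lambda_n}},
\end{equation}
with $B_{\lambda_1\cdots\lambda_n}=\sum_{\sigma_1\cdots\sigma_n} x_{\sigma_1\cdots\sigma_n}\,M_{\lambda_1}(\sigma_1)\otimes\cdots\otimes M_{\lambda_n}(\sigma_n)$. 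Because $A\otimes \I\ge 0$ is equivalent to $A\ge 0$ whenever the identity factor is nontrivial, $\Phi_{AB\cdots Z}\ge 0$ holds if and only if each block $B_{\lambda_1\cdots\lambda_n}\ge 0$ for every tuple with $\abs{\lambda_i}\le d$. This yields \AEq~\eqref{eq:irrepPositivity}.

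For the final refinement, I would use the assumed $\pi^{\otimes N}$ symmetry for $\pi\in S_n$: conjugation by $\pi^{\otimes N}$ permutes the $n$ slot labels, which in the block decomposition corresponds to permuting the partition labels $(\lambda_1,\dots,\lambda_n)$. Thus the blocks $B_{\lambda_1\cdots\lambda_n}$ and $B_{\lambda_{\pi(1)}\cdots\lambda_{\pi(n)}}$ are unitarily equivalent, so checking positivity on one representative per $S_n$-orbit suffices, which we may take to be the tuples ordered as $\lambda_1\ge\lambda_2\ge\cdots\ge\lambda_n$ for any chosen total order on $\Lambda_N$. The main bookkeeping obstacle is the tensor-factor rearrangement in the first step; everything else is an immediate consequence of the Schur-Weyl-type decomposition quoted from the literature.
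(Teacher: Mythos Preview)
Your proposal is correct and follows essentially the same route as the paper: the paper's argument is precisely to invoke \AEq~\eqref{eq:SNdecom} on each tensor slot, observe that $\Phi_{AB\cdots Z}$ thereby block-diagonalizes into the blocks $B_{\lambda_1\cdots\lambda_n}$ (tensored with multiplicity identities), and then use the $\pi^{\otimes N}$ symmetry to reduce to ordered tuples. Your write-up is in fact more explicit than the paper's, which treats the observation as an immediate consequence of the preceding discussion rather than giving a formal proof.
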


There is yet another way to parameterize the optimization problem, which additionally
incorporates the constraint $P_N^+\Phi_{AB\cdots Z}P_N^+
    =\Phi_{AB\cdots Z}$ more directly.

Let us recall from the above that $\Phi_{AB \cdots Z}$ as well as $P_N^+$ are 
linear combinations of operators of the form $V_{\sigma_1}\otimes 
V_{\sigma_2}\otimes\cdots\otimes V_{\sigma_n}$. Thus, by choosing a suitable 
basis such that $V_{\sigma_i}$ are all block-diagonal, both $\Phi_{AB \cdots 
Z}$ and $P_N^+$ are also block-diagonal. The possible blocks of $V_{\sigma}$ 
are labeled by partitions of the form $\lambda = (N_1,N_2,\ldots,N_k)$ with 
$k=\abs{\lambda} \le d$. Correspondingly, the possible blocks of  
$V_{\sigma_1}\otimes V_{\sigma_2}\otimes\cdots\otimes V_{\sigma_n}$ are labeled 
by a tuple of partitions $\vl=(\lambda_1,\lambda_2,\ldots,\lambda_n)$ with 
$\abs{\lambda_i} \le d$. Each of such blocks may appear multiple times, but 
because of \AEq~\eqref{eq:SNdecom}, this simply results in exactly the same 
blocks in $\Phi_{AB \cdots Z}$ as well as $P_N^+$.  Therefore, considering just 
one time of appearance of each block is sufficient.  Moreover, because of the 
symmetry of coefficients in the linear combination under vertical permutations 
as in \AEq~\eqref{eq:coSn}, only a single representative of the tuples of 
partitions that are different by 
a vertical permutation needs to be considered. Hence, we are left with analyzing the 
constraint $P_N^+\Phi_{AB\cdots Z}P_N^+=\Phi_{AB\cdots Z}$ within the blocks 
corresponding to  $\vl=(\lambda_1,\lambda_2,\ldots,\lambda_n)$.

More specifically, let $\mathcal{H}_{\lambda_i}$ denote the subspace 
corresponding to the blocks $\lambda_i$ of the operators $V_{\sigma_i}$. Then 
the subspace corresponding to the block 
$\vl=(\lambda_1,\lambda_2,\ldots,\lambda_n)$ of $V_{\sigma_1}\otimes 
V_{\sigma_2}\otimes\cdots\otimes V_{\sigma_n}$ is given by
\begin{equation}
  \mathcal{H}_{\vect{\lambda}}=\mathcal{H}_{\lambda_1}
  \otimes\mathcal{H}_{\lambda_2}\otimes\cdots\otimes\mathcal{H}_{\lambda_n}. 
  %,\quad \lambda_1\ge\lambda_2\ge\dots\ge\lambda_n.
  \label{eq:orthogonalSubspaces}
\end{equation}
In this subspace, the symmetric projection $P_N^+$ reads
\begin{equation}
  (P_N^+)^{\vl}=\frac{1}{N!}\sum_{\sigma\in S_N}M_{\lambda_1}(\sigma)\otimes 
  M_{\lambda_2}(\sigma)\cdots\otimes M_{\lambda_n}(\sigma).
  \label{eq:symmetric_projection_reduced}
\end{equation}
The constraint $P_N^+\Phi_{AB\cdots Z}P_N^+=\Phi_{AB\cdots Z}$ restricted to the 
subspace $\mathcal{H}_{\vl}$ means that the corresponding block of 
$\Phi_{AB\cdots Z}$, denoted as $\Phi_{AB\cdots Z}^{\vl}$, is supported only on 
the symmetric subspace defined by the projection $(P_N^+)^{\vl}$,
\begin{equation}
  \mathcal{K}_{\vl}=\mathrm{Image}\left[(P_N^+)^{\vl}\right].
  \label{eq:eigenspace}
\end{equation}
Thus, if one chooses a basis 
$\{\ket{\Psi^{\vl}_i}\}_{i=1}^{k_{\vl}}$, where $k_{\vl} = \dim 
(\mathcal{K}_{\vl})$, for this subspace $\mathcal{K}_{\vl}$, then the 
corresponding block of $\Phi_{AB\cdots Z}$ is of the form
\begin{equation}
  \Phi^{\vect{\lambda}}_{AB \cdots Z}
  =\sum_{i,j=1}^{k_{\vect{\lambda}}}X_{ij}^{\vect{\lambda}}
  \ket{\Psi^{\vect{\lambda}}_i}\bra{\Psi^{\vect{\lambda}}_j}.
  \label{eq:eigenoperators}
\end{equation}
In this way, $\Phi^{\vect{\lambda}}_{AB\cdots Z}$ is parameterized by the 
matrix $X^{\vl}$, and its positivity reduces to the positivity of $X^{\vl}$.

In short, let us summarize the procedure to implement the optimization problem.
First, enumerate all irreducible representations of $S_N$, i.e., all 
possible partitions $\lambda$. Then, select those partitions that have length 
$\abs{\lambda}$ no longer than $d$. Based on that, enumerate all tuples of 
partitions $\vl=(\lambda_1,\lambda_2,\ldots,\lambda_n)$ with $\abs{\lambda_i} 
\le d$. For each of those tuples $\vl$, compute the symmetric projection 
$(P_N^+)^{\vl}$ by \AEq~\eqref{eq:symmetric_projection_reduced} and 
select a basis for $\mathcal{K}_{\vl} = \mathrm{Image} (P_N^+)^{\vl}$.
Finally, for each partition tuple $\vl$, consider the associated positive 
semidefinite Hermitian matrix variable $X^{\vl}$ and write down the constraints 
corresponding to the condition on the marginals in 
\AEq~\eqref{eq:marginalFeasibilitySN} to complete the SDP.

In addition, we provide some more details for the construction of the basis of 
$\mathcal{K}_{\vl}$. For readers who are familiar with the representation 
theory of groups, there is a simple characterization of $\mathcal{K}_{\vl}$ that 
helps carrying out the practical implementation. In the language of 
representation theory, $\mathcal{H}_{\lambda_i}$ is an irreducible 
representation of $S_N$, while $\mathcal{H}_{\vl}$ is an irreducible 
representation of $(S_N)^n$. This space is also a representation of $S_N$ via 
the diagonal embedding into $(S_N)^n$, which maps $\sigma \in S_N$  to 
$(\sigma,\sigma,\ldots,\sigma) \in (S_N)^n$. As a representation of $S_N$, 
$\mathcal{H}_{\vl}$ contains a subrepresentation $\mathcal{K}_{\vl}$ on which 
$S_N$ acts trivially (this is technically known as the isotropic component of 
the trivial representation).  Methods of representation theory then allow for 
detailed characterization of $\mathcal{K}_{\vl}$. In particular, one obtains the 
dimension of $\mathcal{K}_{\vl}$ as~\cite{Fulton.Harris1991}
\begin{equation}
k_{\vl} = \frac{1}{N!}\sum_{\sigma \in S_N} \prod_{i=1}^{n} \Tr(M_{\lambda_i} 
(\sigma)).  \end{equation}
The symmetric projection $(P_N^+)^{\vl}$ in 
\AEq~\eqref{eq:symmetric_projection_reduced} is in fact also known as the 
twirling operator: it maps a vector of $\mathcal{H}_{\vl}$ to its average under 
the action of the group $S_N$. A basis of this space can be found by applying 
the twirling operation  $(P_N^+)^{\vl}$ to a set of $k^{\vl}$ random vectors in 
$\mathcal{H}_{\vl}$; if the resulted vectors are linearly independent, they 
form a basis of $\mathcal{K}_{\lambda}$, else one can start over with another 
random set of vectors. As an alternative method, 
\AEqs~(\ref{eq:symmetric_projection_reduced},\,\ref{eq:eigenspace}) imply that 
$\mathcal{K}_{\vl}$ is the common unit eigenspace of 
$M_{\lambda_1}(\sigma)\otimes M_{\lambda_2}(\sigma)\otimes\cdots\otimes 
M_{\lambda_n}(\sigma)$ for all $\sigma\in S_N$. As all eigenvalues of 
$M_{\lambda_i}(\sigma)$ are always in the unit circle, a basis of 
$\mathcal{K}_{\vl}$ can also constructed from calculating the kernel of
\begin{equation}
  M_{\lambda_1}(\sigma_s)\otimes M_{\lambda_2}(\sigma_s)\otimes\dots\otimes 
  M_{\lambda_n}(\sigma_s)+
  M_{\lambda_1}(\sigma_c)\otimes M_{\lambda_2}(\sigma_c)\otimes\dots\otimes 
  M_{\lambda_n}(\sigma_c)-
  2\I,
  \label{eq:kernel}
\end{equation}
where $\sigma_s=(ab)$ and $\sigma_c=(ab\cdots z)$ form a set of generators of 
$S_N$.

As another technical remark, working with unitary representation requires 
computation with cyclotomic numbers, which is often slow. Therefore, one may 
adjust the procedure by implementing intermediate computations in non-unitary 
representations (or equivalently, working in non-orthogonal bases) where matrix 
elements (of the representations of symmetric groups) are all rationals.

%%%%%%%%%%%%%%%%%%%%%%%%%%%%%%%%%%%%%%%%%%%%%%%%%%%%%%%%%%%%%%%%%%%%%%%%%%%%%
\section{Multi-party extension: dual problem and entanglement witness}
\label{app:SEAME}
%%%%%%%%%%%%%%%%%%%%%%%%%%%%%%%%%%%%%%%%%%%%%%%%%%%%%%%%%%%%%%%%%%%%%%%%%%%%%

Specifically for the existence problem of AME states, as $\Phi_{AB}$ is 
uniquely determined, one can easily verify that the following equation is 
a relaxed but still complete hierarchy of 
Theorem~\ref{thm:marginalPureHierarchy},
\begin{equation}
  \begin{aligned}
    &\findover\quad && \Phi_{ABC\cdots Z}\\
    &\subto && \Tr_{C\cdots Z}(P_N^+\Phi_{ABC\cdots Z}P_N^+)=\Phi_{AB},\\
    &       && P_N^+\Phi_{ABC\cdots Z}P_N^+\ge 0,
  \end{aligned}
  \label{eq:symmetricExtension}
\end{equation}
where $\Phi_{AB}$ is the unique quantum state given by Theorem~\ref{thm:AME}.  
Alternatively, we can write the objective function in 
\AEq~\eqref{eq:symmetricExtension} as $\max_{\Phi_{ABC\cdots Z}}\{0\}$, such 
that the dual problem reads
\begin{equation}
  \begin{aligned}
    &\minover[W_{AB}]\quad && \Tr(W_{AB}\Phi_{AB})\\
    &\subto && P_N^+W_{AB}\otimes\I_{C\cdots Z}P_N^+\ge 0,
  \end{aligned}
  \label{eq:entanglementWitness}
\end{equation}
where $W_{AB}$ is Hermitian. One can easily verify that strong duality 
holds from Slater's condition~\cite{Boyd.Vandenberghe2004}
with positivity considered on the symmetric subspace, which means the 
problem in \AEq~\eqref{eq:symmetricExtension} is feasible if and only if the 
solution of the dual problem in \AEq~\eqref{eq:entanglementWitness} equals zero.  
Thus, if $\Tr(W_{AB}\Phi_{AB})<0$, we know that $\Phi_{AB}$ is 
entangled and the corresponding AME state does not exist from 
Theorem~\ref{thm:AME}. Notice that numerically determining the negativity of 
the dual problem in \AEq~\eqref{eq:entanglementWitness} is less sensitive to 
small numerical errors, and hence, more stable than solving the primal 
feasibility problem in \AEq~\eqref{eq:symmetricExtension}. Moreover, the 
physical meaning of $W_{AB}$ is also clear: a feasible point $W_{AB}$ of 
\AEq~\eqref{eq:entanglementWitness} with a negative objective value provides an 
entanglement witness for $\Phi_{AB}$ in the symmetric subspace 
$P_2^+=\frac{1}{2}(\I_{AB}+V_{AB})$.  Indeed, because the set of separable 
states in $P_2^+$ is given by $\conv\{\ket{\psi} 
\bra{\psi}\otimes\ket{\psi}\bra{\psi}\}$, the constraint in 
\AEq~\eqref{eq:entanglementWitness} implies that
\begin{equation}
  \bra{\psi}\bra{\psi}W_{AB}\ket{\psi}\ket{\psi}=
  \bra{\psi}^{\otimes N}P_N^+W_{AB}\otimes\I_{C\cdots Z}P_N^+
  \ket{\psi}^{\otimes N}\ge 0.
  \label{eq:asWitness}
\end{equation}

The analysis of the symmetry and parametrization of the dual problem 
\AEq~\eqref{eq:entanglementWitness} is similar to that for the primal problem 
as discussed in \Appendix~\ref{app:SEGen}; in fact, it is more straightforward 
for the dual problem. For $g\in G$ defined in 
\AEqs~(\ref{eq:group1},\,\ref{eq:group2}), we have
\begin{equation}
  g\Phi_{AB}g^\dagger=\Phi_{AB},\quad gP_N^+g^\dagger=P_N^+.
  \label{eq:symmetryDual}
\end{equation}
In addition, we know that $\Phi_{AB}$ and $P_N^+$ are also in the symmetric 
subspace $P_2^+$, i.e.,
\begin{equation}
  P_2^+\Phi_{AB}P_2^+=\Phi_{AB},\quad\left(P_2^+\otimes\I_{C\cdots 
  Z}\right)P_N^+\left(P_2^+\otimes\I_{C\cdots Z}\right)=P_N^+.
  \label{eq:symmetryDualAdd}
\end{equation}
Thus, we can assume that $W_{AB}$ is invariant under $G$ and constrained to 
$P^+_2$, i.e.,
  \begin{equation}
    gW_{AB}g^\dagger=W_{AB} \FA g\in G, \quad P_2^+W_{AB}P_2^+=W_{AB}.
  \label{eq:entanglementWitnessSim}
\end{equation}

Similar to the analysis of \MEq~\eqref{eq:PhiSymm}, one can easily see that 
$gW_{AB}g^\dagger=W_{AB}$ for all $g \in G$ implying that
\begin{equation}
  W_{AB}=\sum_{l=0}^nw_l\cP\{V^{\otimes l}\otimes\I^{\otimes (n-l)}\},
  \label{eq:WSymm}
\end{equation}
where again $\cP$ denotes the sum over all permutations of the tensor 
product under its argument.
Furthermore, $P_+W_{AB}P_+=W_{AB}$ implies that
\begin{equation}
  w_l=w_{n-l} \FA l=0,1,\dots,n-r-1.
  \label{eq:symmetric}
\end{equation}
Hence, the objective function $\Tr(W_{AB}\Phi_{AB})$ can be expressed as
\begin{equation}
  \Tr(W_{AB}\Phi_{AB})=\sum_{l=0}^na_lw_l,
  \label{eq:objSimDual}
\end{equation}
where
\begin{equation}
  a_l=\Tr(\cP\{V^{\otimes l}\otimes\I^{\otimes (n-l)}\}\Phi_{AB})
  =\frac{\binom{n}{l}}{\min\{d^l,d^{n-l}\}},
  \label{eq:al}
\end{equation}
from \AEq~\eqref{eq:ai}. 

To get some intuition about the variables $w_l$, let us consider the problem of 
the existence of AME(4,6). Here $n=4$ and hence, there 
are five variables $w_l$ in \AEq~\eqref{eq:WSymm}. Moreover, 
\AEq~\eqref{eq:symmetric} implies that only three of those variables 
are independent. Furthermore, one can notice that the dual problem in 
\AEq~\eqref{eq:entanglementWitness} is homogeneous, that is, the objective 
function is linear and the constraints are invariant under rescaling $W_{AB} 
\to t W_{AB}$ with $t > 0$. This allows one to impose that $w_0=0$ or $w_0=\pm 
1$, and one is then left with two independent variables.

The constraint $P_N^+W_{AB}\otimes\I_{C\cdots Z}P_N^+\ge 0$ can be expressed in 
terms of the variables $w_l$ in similarity to \Appendix~\ref{app:SEGen}.  Let 
us summarize the arguments once more for completeness. The fact that 
$W_{AB}\otimes\I_{C\cdots Z}$ and $P_N^+$ are both of linear combinations of 
$V_{\sigma_1}\otimes V_{\sigma_2}\otimes\cdots\otimes V_{\sigma_n}$ implies 
that they are block-diagonal when one chooses a basis such that the 
$V_{\sigma_i}$ are block-diagonal. Let $\mathcal{H}_{\lambda_i}$ denote the 
subspace corresponding to the block of $V_{\sigma_i}$ labeled by partition 
$\lambda_i$ with $\abs{\lambda_i} \le d$.  Then $\mathcal{H}_{\vl} 
= \mathcal{H}_{\lambda_1} \otimes \mathcal{H}_{\lambda_2} \otimes 
\mathcal{H}_{\lambda_n}$ denotes the subspace corresponding to a block of 
$V_{\sigma_1}\otimes V_{\sigma_2}\otimes\cdots \otimes V_{\sigma_n}$ labeled by 
a tuple of partitions $\vl=(\lambda_1,\lambda_2,\ldots,\lambda_n)$.  Moreover, 
within this subspace, $(P_N^+)^{\vl}$ is a projection onto the symmetric 
subspace, which is typically low-rank. Let $\mathcal{K}_{\vl}$ denote the image 
of $(P_N^+)^{\vl}$ and $\{\ket{\Psi_i^{\vl}}\}_{i=1}^{k_{\vl}}$ denote a basis 
of $\mathcal{K}_{\vl}$.  One defines the matrix $Y^{\vl}$ as
  \begin{equation}
    Y^{\vl}_{ij} = \bra{\Psi^{\vl}_i} P_N^+W_{AB}\otimes\I_{C\cdots Z}P_N^+ 
    \ket{\Psi^{\vl}_j}.
  \end{equation}
Notice that in computing these matrix elements, we only need the blocks of 
$P_N^+$ and $W_{AB}\otimes\I_{C\cdots Z}$ corresponding to partitions $\vl$.  
Then, $P_N^+W_{AB}\otimes\I_{C\cdots Z}P_N^+\ge 0$ is equivalent to $Y^{\vl} 
\ge 0$ for all tuples of partitions $\vl$ with $\abs{\lambda_i} \le d$.  
Moreover, since the problem is symmetric under vertical permutations, tuples of 
partitions $\vl$ that are different by a vertical permutation are considered 
just once.

As a final remark, we can consider the relaxations of the constraints in 
\AEq~\eqref{eq:entanglementWitness}. If the optimal value of a relaxed problem 
is non-negative, we conclude that the optimal value of 
\AEq~\eqref{eq:entanglementWitness} is also non-negative.  In particular, 
ignoring some tuples of partitions $\vl$ in the constraints $Y^{\vl} \ge 0$ 
corresponds to a relaxation of \AEq~\eqref{eq:entanglementWitness}. For example, 
one can consider only $\vl$ such that $(P_N^+)^{\vl}$ is rank-$1$ and obtain 
a linear program relaxation of \AEq~\eqref{eq:entanglementWitness}.

%%%%%%%%%%%%%%%%%%%%%%%%%%%%%%%%%%%%%%%%%%%%%%%%%%%%%%%%%%%%%%%%%%%%%%%%%%%%%
\section{Failed approaches to the AME problem}
\label{app:fail}
%%%%%%%%%%%%%%%%%%%%%%%%%%%%%%%%%%%%%%%%%%%%%%%%%%%%%%%%%%%%%%%%%%%%%%%%%%%%%

In this section, we discuss the approaches that we applied to investigate the 
separability of states which encode the existence of
AME states. For the interesting case of $\AME(4,6)$, however, none
of them delivers a solution to the problem.

\subsection{The state for the $\AME(4,6)$ problem}
Let us start by recalling the state presented already in Corollary~\ref{cor:AME46}.
The state acts on a $6^4 \times 6^4$ system, where Alice and Bob each own four 
six-dimensional systems. The state is given by
\begin{equation}
    \Phi_{AB}= \frac{1}{2\cdot 6^4} \left(
    \frac{P_+^{\otimes 4}}{343}
    +\frac{\cP\big\{P_+^{\otimes 2}
    \otimes P_-^{\otimes 2}\big\}}{315}
    +\frac{P_-^{\otimes 4}}{375} \right),
    \label{eq:Phi46A}
\end{equation}
where $P_\pm$ are the projectors onto the (anti-)symmetric subspace 
of the $6\times6$ systems. Here, the tensor product denotes the tensor product 
between the four $6\times6$ systems and $\mathcal{P}\{\cdot\}$ denotes a sum 
over all permutations of the four copies that give distinct terms; in this 
case, there are six different terms. Note that the state $\Phi_{AB}$ acts on 
the symmetric subspace only.

It is also useful to consider the partial transposition of this state.
Let $\ket{\phi^+}=(\sum_{k=0}^{5}\ket{kk})/\sqrt{6}$ be the maximally entangled 
state of two six-dimensional systems and define $P_\perp=\I-\ketbra{\phi^+}$ as 
the projector onto the corresponding orthogonal subspace.  Then, we have
\begin{equation}
    \Phi_{AB}^{T_B}= \frac{1}{6^4} \left(
    \ket{\phi^+}\bra{\phi^+}^{\otimes 4}
    +\frac{\cP\big\{\ket{\phi^+}\bra{\phi^+}\otimes
    P_\perp^{\otimes 3}\big\}}{35^2}
    +\frac{33 P_\perp^{\otimes 4}}{35^3} \right).
    \label{eq:Phi46TA}
\end{equation}
This time, the sum over all permutations contains four different terms.
Clearly, the separability of $\Phi_{AB}$ is equivalent to the separability
of $\Phi_{AB}^{T_B}$. To test whether or not these states are entangled the 
following approaches came to our mind:

\begin{itemize}

\item
The state $\Phi_{AB}^{T_B}$ has a similarity to the states discussed in 
Ref.~\cite{Huber.etal2018b}. There, a family of bound entangled
states with high Schmidt rank has been constructed. To do so, one
considers a bipartite system, where Alice's as well as Bob's system can be 
further split up into two subsystems, $A_1$ and $A_2$ as well as $B_1$ and 
$B_2$, respectively. Then, one investigates unnormalized states of the form
\begin{equation}
  Z = X_{A_1 B_1} \otimes (P_\perp)_{A_2 B_2}
  + Y_{A_1 B_1}\otimes\ketbra{\phi^+}_{A_2 B_2}.
  \label{eq:bestates}
\end{equation}
Under weak conditions on $X_{A_1 B_1}$ and $Y_{A_1 B_1}$ one can show
that $Z$ is a bipartite entangled state with a positive partial transpose. 
For instance, one may choose $X_{A_1 B_1}=(P_\perp)_{A_1 B_1}$
and $Y_{A_1 B_1} =(d_1-1)(d_2+1)\ketbra{\phi^+}_{A_1 B_1}$. Here, $d_1$ is the
dimension of $A_1$ and $B_1$ and $d_2$ the dimension of $A_2$ and $B_2$. For
the argument of Ref.~\cite{Huber.etal2018b} it is crucial that these dimensions
are different, typically one takes $d_2 \gg d_1$.

The entanglement proof for the states in Ref.~\cite{Huber.etal2018b} goes as 
follows: The map \begin{equation}
  \Lambda(\cdot) = \I\Tr(\cdot)-\frac{1}{k}\id(\cdot),
\end{equation}
is $k$-positive, where $\id(\cdot)$ denotes the identity map. That is, the 
output of $\id\otimes\Lambda$ is always positive on states with Schmidt rank 
$k$.  A non-positive output by applying this map to the $A_2 B_2$ part of 
states of the form in \AEq~(\ref{eq:bestates}), i.e., applying $\id_{A_1 B_1 
A_2}\otimes\Lambda_{B_2}$, would indicate that the state has a very high 
Schmidt rank in the systems $A_2 B_2$. The (low-dimensional) systems $A_1 B_1$ 
cannot significantly change the Schmidt rank, so the total state must be 
entangled. This idea can also be formalized by writing down explicit 
entanglement witnesses \cite{Huber.etal2018b}.

For the state $\Phi_{AB}^{T_B}$ one can apply similar tricks. For instance,
one can split the four subsystems of Alice and Bob in a one-vs-three partition
to achieve $d_2 \gg d_1$. In this particular case, however, the state is not
detected as entangled, the expectation value of the witness from Ref.~
\cite{Huber.etal2018b} vanishes. One may also consider further refined splits, 
as any six-dimensional system can be seen as a $(2\times3)$-system. For 
example, one can split the system such that $d_1= 2^4=16$ and $d_2= 3^4=81$.  
Still, we found no proof of entanglement for $\Phi_{AB}^{T_B}$, however, the 
expectation value for several of the resulting witnesses vanishes.

\item
Similar states as in Ref.~\cite{Huber.etal2018b} were also considered before in 
Ref.~\cite{Piani.Mora2007}. There, entanglement witnesses of the form
\begin{equation}
W=\ketbra{\psi_1}_{A_1 B_1} \otimes \I_{A_2 B_2}
-(1+\varepsilon)\ketbra{\psi_1}_{A_1 B_1} \otimes \ketbra{\psi_2}_{A_2 B_2}
\label{eq:pianiwitness}
\end{equation}
have been investigated. For the purpose of Ref.~\cite{Piani.Mora2007}, it was 
only relevant that for some $\varepsilon>0$ this operator is indeed positive on 
all separable states, and it was shown that this holds for nearly arbitrary 
$\ket{\psi_1}$ and $\ket{\psi_2}$.

For our purposes, we need to calculate the maximal $\varepsilon$
explicitly. If we assume that $\ket{\psi_1}$ and $\ket{\psi_2}$
are maximally entangled states in different dimensions, this can
be done as follows: First, we know that $W_k=k/d_2-\ketbra{\psi_2}$ is 
a Schmidt rank-$k$ witness. Second, if we consider a product state
$\ket{\eta}=\ket{\alpha}_{A_1 A_2} \otimes \ket{\beta}_{B_1 B_2}$,
the unnormalized pure state
\begin{equation}
\ketbra{\zeta}_{A_2 B_2} = \Tr_{A_1 B_1}\big[\ketbra{\eta}_{A_1 A_2 B_1 B_2} 
\ketbra{\psi_1}_{A_1 B_1}\big],
\end{equation}
has at most Schmidt rank $d_1$. Combining these observations, we find that $W$ 
in \AEq~(\ref{eq:pianiwitness}) is an entanglement witness if
\begin{equation}
\varepsilon \leq \frac{d_2}{d_1}-1.
\end{equation}
For instance, taking the state $\Phi_{AB}^{T_B}$ as well as $d_1=2$
and $d_2= 6^3 \times 3$, one obtains $\varepsilon = 323$.
Still, we find $\Tr(W\Phi_{AB}^{T_B})=0$ and no entanglement is detected.

\item
As described in \Appendix~\ref{app:SEGen}, we also tested whether or not there 
exists a symmetric extension for the state $\Phi_{AB}$ making use of the 
symmetries to reduce the number of parameters substantially. However, for large 
extensions, computing the bases for $\mathcal{K}_{\vl}$ in 
\AEq~\eqref{eq:eigenspace} as well as rephrasing the constraints in terms of 
the variables in \AEq~\eqref{eq:eigenoperators} takes a considerable amount of 
time.  Moreover, precision issues pose a major challenge due to coefficients 
being of different order of magnitude.

One possible relaxation that simplifies the computation is to consider the 
second last constraint in the SDP in \AEqs~\eqref{eq:marginalFeasibilitySN} only 
for some marginal of the extension.  The largest extension we computed reliably 
is $N=5$ while restricting the second last constraint to 
$\Phi_{ABC}=\Tr_{DE}(\Phi_{ABCDE})$. Furthermore, we computed a PPT-extension 
for $N=3$ utilizing the basis from Ref.~\cite{Eggeling.Werner2001}. Both of 
these extensions exist up to numerical precision.

\item
We implemented the dual problem in \AEq~\eqref{eq:entanglementWitness} 
exploiting its symmetry as discussed in \Appendix~\ref{app:SEAME}. Using the 
linear program relaxation of the problem by means of retaining only partitions 
$\vl$ such that the symmetric projection $(P^+_N)^{\vl}$ is rank-$1$ as 
discussed there, we can show that the optimal values are non-negative up to 
$N=7$. Thus the hierarchy fails to indicate the possible entanglement of 
$\Phi_{AB}$ up to $N=7$.

\item
A final idea could be to start with the symmetric state $\Phi_{AB}$ and use the 
following strategy to prove that the state is entangled: For a multiparticle
symmetric state it is known that it is either fully separable or genuine multipartite
entangled. This implies that if a multiparticle symmetric state is entangled for
one bipartition, it must be entangled for all bipartitions. Hence, proving entanglement
for one bipartition can be used to show entanglement for another bipartition, even
if the state has a positive partial transpose for the latter bipartition. This 
trick has been exploited to find symmetric bound entangled states 
\cite{Toth.Guehne2009}.

For the state $\Phi_{AB}$ one would need to find an embedding in a multiparticle 
system, where $\Phi_{AB}$ corresponds to some bipartition. This, however, is not straightforward, 
as the embedding idea from Ref.~\cite{Toth.Guehne2009} does not work for bipartite symmetric states
with maximal rank. 
\end{itemize}

\subsection{The state for the $\AME(7,2)$ problem}

For training purposes, it may be useful to consider a state where 
the separability properties are known. The following state originates from the seven-qubit
AME problem, where no AME state exists \cite{Huber.etal2017}. It is, however, 
not easy to see the entanglement of the corresponding state directly, 
and finding a criterion might also help to decide whether or not there is an $\AME(4,6)$ state. 

The state acts on a $2^7 \times 2^7$ system, where Alice and Bob each own
seven qubits:
\begin{equation}
  \Phi_{AB}=
  \frac{113}{1119744}P_+^{\otimes 7}
  +\frac{17}{124416}\mathcal{P}
  \left\{P_+^{\otimes 5}\otimes P_-^{\otimes 2}\right\}
  +\frac{1}{13824}\mathcal{P}
  \left\{P_+^{\otimes 3}\otimes P_-^{\otimes 4}\right\}
  +\frac{1}{1536}\mathcal{P}
  \left\{P_+^{\otimes 1}\otimes P_-^{\otimes 6}\right\},
  \label{eq:AME72}
\end{equation}
where $P_\pm$ are the projectors onto the (anti-)symmetric subspace of the 
$2\times2$ systems.

For the partial transposition, let $\ket{\phi^+}=(\ket{00}+\ket{11})/\sqrt{2}$ 
be the two-qubit Bell state, and $P_\perp = \I - \ketbra{\phi^+}$ the projector 
onto the corresponding orthogonal subspace. Then,
\begin{equation}
  \begin{aligned}
    \Phi_{AB}^{T_B}&=
    \frac{1}{128}\ketbra{\phi^+}^{\otimes 7}
    +\frac{1}{10368}\mathcal{P}
    \left\{\ketbra{\phi^+}^{\otimes 3}\otimes P_\perp^{\otimes 4}\right\}
    +\frac{1}{15552}\mathcal{P}
    \left\{\ketbra{\phi^+}^{\otimes 2}\otimes P_\perp^{\otimes 5}\right\}\\
    &+\frac{1}{23328}\mathcal{P}
    \left\{\ketbra{\phi^+}\otimes P_\perp^{\otimes 6}\right\}
    +\frac{11}{139968}P_\perp^{\otimes 7}.
    \label{eq:AME72PT}
  \end{aligned}
\end{equation}
As no $\AME(7,2)$ state exists, the states $\Phi_{AB}$ and $\Phi_{AB}^{T_B}$ in 
\AEqs~(\ref{eq:AME72},\,\ref{eq:AME72PT}) are entangled, but we are not aware 
of any operational entanglement criterion detecting them.

%%%%%%%%%%%%%%%%%%%%%%%%%%%%%%%%%%%%%%%%%%%%%%%%%%%%%%%%%%%%%%%%%%%%%%%%%%%%%
\twocolumngrid
\bibliography{QuantumInf}
%%%%%%%%%%%%%%%%%%%%%%%%%%%%%%%%%%%%%%%%%%%%%%%%%%%%%%%%%%%%%%%%%%%%%%%%%%%%%

\end{document}